\documentclass[10pt,a4paper]{article}

\usepackage[margin=1in]{geometry}
\usepackage{graphicx}
\usepackage[colorlinks=true,allcolors=black]{hyperref}
\usepackage{amsmath,amsthm,amsfonts,amssymb,amscd,mathrsfs}
\usepackage{enumerate}
\usepackage{enumitem}
\usepackage{bbm}
\usepackage{float}
\usepackage{fancyvrb}
\usepackage{booktabs}
\usepackage{subcaption}
\usepackage{algpseudocode,algorithmicx,algorithm}
\usepackage{etoolbox}
\usepackage{arydshln}
\usepackage{mathtools}
\patchcmd{\abstract}{\small}{}{}{}
\newtheorem{theorem}{Theorem}[section]

\newtheorem{corollary}[theorem]{Corollary}
\newtheorem{proposition}[theorem]{Proposition}
\newtheorem{lemma}[theorem]{Lemma}
\newtheorem{assumption}[theorem]{Assumption}

\newtheorem{remark}[theorem]{Remark}
\newenvironment{breakableproblem}[1]{%
    \refstepcounter{theorem}%
    \par\addvspace{\topsep}%
    \noindent\textit{Problem~\thetheorem\ (#1):}\enspace\ignorespaces
}{%
    \par\addvspace{\topsep}%
}

\DeclareMathOperator{\rank}{rank}
\DeclareMathOperator{\col}{col}

\DeclareMathOperator{\vecs}{vecs}
\DeclareMathOperator{\spanop}{span}
\newcommand{\R}{\mathbb{R}}

\newcommand{\T}{^{\top}}

\allowdisplaybreaks[4]
\title{Reinforcement Learning-Based Output Feedback LQR for Continuous-Time MIMO Systems%
\thanks{This work was supported by Guangdong Basic and Applied Basic Research Foundation (2024A1515010193, 2025B1515120042), National Natural Science Foundation of China (62350710214, 62121004, 62273100, 62550003, U23A20325), Guangdong Science and Technology Program (2024B1212010002), and Shenzhen Science and Technology Program (KQTD20221101093557010).}%
\thanks{This work has been submitted to the IEEE for possible publication. Copyright may be transferred without notice, after which this version may no longer be accessible.}}
\date{\today}
\author{%
Qihua Chen\thanks{Qihua Chen, Mingxiang Liu, Lili Wang, and Zhiyun Lin are with the Guangdong Provincial Key Laboratory of Fully Actuated System Control Theory and Technology, School of Automation and Intelligent Manufacturing, Southern University of Science and Technology, Shenzhen 518055, P. R. China (e-mails: 12332648@mail.sustech.edu.cn; liumx@ieee.org; wangll@sustech.edu.cn; linzy@sustech.edu.cn). Corresponding author: Zhiyun Lin.}
~~~~Mingxiang Liu\footnotemark[3]
~~~~Lili Wang\footnotemark[3]\\
Zhiyun Lin\footnotemark[3]
~~~~and Minyue Fu\thanks{Minyue Fu is with the School of Automation, Guangdong University of Technology, Guangzhou 510006, China (e-mail: brianfu2106@outlook.com).}}

\begin{document}

\maketitle

\begin{abstract}
This article studies model-free output feedback linear quadratic regulation (LQR) for continuous-time linear systems with an $n$-dimensional state, an $m$-dimensional input, and a $p$-dimensional output, using filtered input--output data. Since the system state is unavailable, existing methods rely on dynamic filters to parameterize the hidden state using measurable input--output signals. However, the intrinsic dimension of the resulting filter-based parametrization can be smaller than the dimension of the complete filtered vector, and this deterministic redundancy can make the Bellman regressions rank deficient. We characterize this intrinsic dimension and show that the conventional filtered vector contains only $2n$ independent components for single-input multi-output (SIMO) systems and $n(m+1)$ independent components for general multi-input multi-output (MIMO) systems. Based on this characterization, a reduced filtered vector is extracted directly from data and used to develop reduced model-free output feedback policy iteration and value iteration equations, eliminating the redundant directions and decreasing the number of unknown parameters while retaining a fully input--output data-based implementation. A numerical example illustrates the rank reduction and the effectiveness of the learned controller.

\textbf{Keywords:} LQR, reinforcement learning, dynamic output feedback, model-free control.
\end{abstract}

\section{Introduction}
Linear quadratic regulation (LQR) is one of the fundamental problems in optimal control. For continuous-time linear systems with known dynamics and measurable states, the optimal state-feedback controller can be obtained by solving an algebraic Riccati equation (ARE)~\cite{Lewis2012}. However, this classical solution requires known system matrices, and its implementation usually relies on full-state feedback. In many practical applications, the system dynamics are unknown and the internal state is not directly measurable. These limitations have motivated the development of data-driven and reinforcement learning (RL)-based methods for optimal control~\cite{SuttonBarto2018,Bertsekas2019}.

Adaptive dynamic programming (ADP) and RL provide an effective framework for solving optimal control problems without explicitly identifying the system model~\cite{Liu2021Survey,Wang2024Survey}. For continuous-time systems, integral reinforcement learning and related ADP methods avoid the direct use of the unknown system dynamics by integrating the Bellman equation over a finite time interval. Based on this idea, policy iteration (PI) and value iteration (VI) algorithms have been developed for continuous-time LQR problems~\cite{Vrabie2009,Jiang2012,Bian2016,BianJiang2022VI,JiangZhou2022BiasPI}. PI-based methods generally learn the optimal solution by solving a sequence of linear equations starting from an initially stabilizing feedback gain, whereas VI-based and modified iteration methods can relax this initialization requirement. Nevertheless, most of these methods are developed under state feedback and require the system state to be measured during learning.

Output feedback learning is essential for practical LQR problems because the full system state is often unavailable. Several RL/ADP-based output feedback methods have been developed for unknown linear systems. Early approaches used measured input--output data or value-function approximation to construct learning equations~\cite{LewisVamvoudakis2011}, while off-policy output-feedback methods were later proposed for unknown continuous-time systems~\cite{Modares2016}. More recent studies have addressed optimal output tracking and regulation under unmeasurable states using dynamic output feedback and input--output data~\cite{Chen2022OutputTracking,Xie2023OutputReg,Zhao2024Incremental,Jiang2025OutputReg}. However, for continuous-time systems, direct input-output parametrization may involve derivatives of input and output signals, which are difficult to obtain accurately in practice. Moreover, exploration signals may introduce estimation bias, leading to additional treatments such as discounted cost functions or bias compensation.

A significant step toward continuous-time model-free output feedback LQR was made in~\cite{Rizvi2020}, where dynamic filters driven by measurable input and output signals were introduced. The resulting filtered input--output vector $z(t)$ can asymptotically parameterize the unmeasured state, so that both the state-feedback value function and the control law can be expressed as functions of $z(t)$. The additional rank conditions required by such state parameterizations were subsequently analyzed in~\cite{RizviLin2023StateParam}. Based on this parametrization, model-free output feedback PI and VI Bellman equations were developed in~\cite{Rizvi2020}. 
This dynamic output feedback framework has also been extended to related problems, such as continuous-time zero-sum games~\cite{RizviLin2020Game}. Recent studies further refined this framework by introducing internal controller states and observer-error compensation to improve the learning equation and reduce computational requirements~\cite{Xie2024}, and by addressing the transient optimality loss caused by observer errors through an input-output-driven internal model~\cite{Xie2025}. More recently, an output-feedback homotopy-based PI method was developed to alleviate the initial stabilizing-policy requirement while providing performance assurance~\cite{Chen2025Homotopy}.

Although the dynamic output feedback framework in~\cite{Rizvi2020} establishes that the filtered input--output vector $z(t)$ contains sufficient information to parameterize the unmeasured state, it does not characterize the intrinsic dimension of this parametrization. State-parametrization sufficiency and intrinsic-dimension characterization are two different questions: the former concerns whether the state can be parameterized after the filter transient vanishes, whereas the latter determines how many independent directions are actually contained in the complete filtered vector and are available to the Bellman regression. If this intrinsic dimension is smaller than the ambient dimension of $z(t)$, the full-vector regression contains deterministic dependencies and may be rank deficient even when the collected input--output data are persistently excited.

Motivated by this observation, this article characterizes the intrinsic dimension of the filter-based state parametrization and develops a corresponding reduction approach for model-free output feedback LQR. The key technical idea is to convert the linear dependence among the filtered signals into a polynomial row-vector space problem. Considering a system with $n$-dimensional states, 
$m$-dimensional input, and $p$-dimensional output, we first analyze the special case of SIMO systems (where 
$m=1$). This polynomial-space characterization shows that the complete filtered vector contains only 
$2n$ independent degrees of freedom. For general MIMO systems, we separate the 
$nm$-dimensional subspace generated by the input-filtered variables from the additional directions contributed by the output-filtered variables, and prove that the latter contributes only 
$n$ independent directions.
Therefore, the intrinsic dimension of the conventional filtered vector is $n(m+1)$, rather than $n(m+p)$. Based on this result, a reduced filtered vector $z_r(t)$ is selected directly from data by column-pivoted QR decomposition, and the model-free output feedback PI and VI equations are reformulated in terms of $z_r(t)$. The proposed reduction removes the deterministic rank deficiency caused by linearly dependent filtered variables, reduces the number of unknown regression parameters, and preserves the fully input-output data-based nature of the learning algorithm.

The rest of this article is organized as follows. Section~II reviews the continuous-time LQR problem and the filter-based dynamic output feedback parametrization used in model-free learning. Section~III characterizes the intrinsic dimension of the filtered input--output parametrization for SIMO and MIMO systems. Section~IV derives the reduced model-free output feedback PI and VI equations. Section~V presents a numerical example based on a jet transport aircraft benchmark system to illustrate the rank reduction and closed-loop performance. Section~VI concludes the article. The proofs of the polynomial representation lemmas are given in the appendices.

\emph{Notation:} For a matrix $X$, $X^{\top}$ denotes its transpose, and $\operatorname{rank}(X)$ denotes its rank. 
$I_n\in \mathbb{R}^{n\times n}$ is a unit matrix, and the zero vector or zero matrix with compatible dimension is denoted by $\textbf{0}$. For a square matrix $A$, $\operatorname{adj}(A)$ denotes its adjugate matrix. The set $\mathbb P^n$ denotes the normed space of all $n$-by-$n$ real symmetric matrices, and $\mathbb P^n_+\triangleq\{P\in\mathbb P^n:P\geq0\}$. The operator $\col(\cdot)$ denotes column-wise stacking. If $A=[a_1,\cdots, a_s]$, then $\operatorname{vec}(A)=\operatorname{col}(a_1,\cdots,a_s)$. The symbol $\otimes$ denotes the Kronecker product. For $x=[x_1,\ldots,x_n]^\top\in\mathbb{R}^n$, define
$\operatorname{vecv}(x)=[x_1^2,x_1x_2,\ldots,x_1x_n,x_2^2,x_2x_3,\ldots,x_{n-1}x_n,x_n^2]^\top.$ For a symmetric matrix $P\in \mathbb{R}^{n\times n}$, define
$\operatorname{vecs}(P)=[p_{11},2p_{12},\ldots,2p_{1n},p_{22},2p_{23},\ldots,2p_{n-1,n},p_{nn}]^\top.$ 
For a polynomial $f(s)$, $\deg f(s)$ denotes its degree. The notation $\gcd(f_1(s),\ldots,f_q(s))$ denotes the greatest common divisor of polynomials $f_1(s),\ldots,f_q(s)$. The set $\mathbb{R}^{p\times m}[s]$ denotes the set of $p\times m$ polynomial matrices with real coefficients. For a set of vectors or polynomial vectors $\mathcal{S}$, $\operatorname{span}\mathcal{S}$ denotes its linear span and $\dim\operatorname{span}\mathcal{S}$ denotes the dimension of this span. For a linear subspace $\mathcal U\subseteq\mathbb R^{1\times m}[s]$ and polynomial row vectors $f(s),g(s)\in\mathbb R^{1\times m}[s]$, the notation $f(s)\equiv g(s)\pmod{\mathcal U}$ means that $f(s)-g(s)\in\mathcal U$; equivalently, $f(s)$ and $g(s)$ represent the same coset in the quotient space $\mathbb R^{1\times m}[s]/\mathcal U$.


\section{Preliminaries}

Consider the continuous-time linear time-invariant system
\begin{equation}
\begin{cases}
    \dot{x}(t)=Ax(t)+Bu(t),\\
    y(t)=Cx(t),
\end{cases}
\label{eq:plant_system}
\end{equation}
where $x(t)\in\R^n$ is the system state, $u(t)\in\R^m$ is the control input, and $y(t)\in\R^p$ is the measured output. We impose the following standard assumptions:

\begin{assumption}
The pair $(A,B)$ is controllable.
\label{ass:controllable}
\end{assumption}

\begin{assumption}
The pair $(C,A)$ is observable.
\label{ass:observable}
\end{assumption}

The LQR problem is to minimize the infinite-horizon quadratic cost
\begin{equation}
    J=\int_0^\infty
    \left(y\T(t)Q_y y(t)+u\T(t)Ru(t)\right)dt,
    \label{eq:lqr_cost}
\end{equation}
where $Q_y=Q_y\T>0$ and $R=R\T>0$ are known, prescribed weighting matrices. Define $Q=C\T Q_y C$. The optimal state-feedback controller is given by
\begin{equation}
    u^*(t)=K^*x(t),
\end{equation}
where
\begin{equation}
    K^*=-R^{-1}B\T P^*,
\end{equation}
and $P^*=(P^*)\T>0$ is the solution of ARE
\begin{equation} 
    A\T P+PA+Q-PBR^{-1}B\T P=0.
    \label{eq:are}
\end{equation}

\subsection{Model-Based and Model-Free LQR Using State Feedback}

Since the ARE is nonlinear, iterative schemes are often used to compute its solution. A classical PI method starts from a stabilizing feedback gain $K_0$~\cite{Kleinman1968}. The procedures of policy evaluation and policy improvement are given by
\begin{subequations}
\begin{align}
    0&=A_k\T P_k+P_k A_k+Q+K_k\T RK_k,
    \label{eq:state_feedback_pi_evaluation}\\
    K_{k+1}&=-R^{-1}B\T P_k.
    \label{eq:state_feedback_pi_improvement}
\end{align}
\end{subequations}
where $A_k=A+BK_k$. This iteration is known to converge to the optimal solution under a stabilizing initial policy. However, the update requires the system matrices $A$ and $B$, as well as the state-weighting matrix $Q$.

Besides PI, VI can also be used to solve the LQR problem~\cite{Bian2016}. Different from PI, which requires an initially stabilizing feedback gain, VI can start from any positive definite matrix $(P_0>0)$. A model-based VI scheme first computes the tentative update
\begin{equation}
\widetilde P_{k+1}=P_k+\epsilon_k\left(A^\top P_k+P_kA+Q-P_kBR^{-1}B^\top P_k\right),
\label{eq:model_based_vi_update}
\end{equation}
where $P_k$ is the estimate of the Riccati solution at the $k$-th iteration, and $\epsilon_k>0$ is a stepsize. The sequence $\{\mathcal B_q\}_{q=0}^{\infty}$ consists of prescribed, norm-bounded truncation regions in $\mathbb P_+^n$ with nonempty interiors. The sets and stepsizes satisfy
\begin{equation}
\begin{aligned}
&\mathcal B_q\subseteq\mathcal B_{q+1},\quad q\in\mathbb N,
\qquad \lim_{q\to\infty}\mathcal B_q=\mathbb P_+^n,\\
&\epsilon_k>0,\qquad
\sum_{k=0}^{\infty}\epsilon_k=\infty,\qquad
\sum_{k=0}^{\infty}\epsilon_k^2<\infty.
\end{aligned}
\label{eq:model_based_vi_conditions}
\end{equation}
A typical choice is $\mathcal B_q=\{P\in\mathbb P_+^n:\|P\|\le b_q\}$, where $0<b_0<b_1<\cdots$ and $b_q\to\infty$. If $\widetilde P_{k+1}\in\mathcal B_q$, the tentative update is accepted; otherwise, the iterate is reset to $P_0$ and the truncation index is increased. The boundedness of each $\mathcal B_q$ prevents excessively large transient iterates, while the limit condition ensures that the unknown Riccati solution is not permanently excluded. Since $P^*\in\mathbb P_+^n$, there exists a finite $q^*$ such that $P^*$ lies in the interior of $\mathcal B_{q^*}$. Hence, for sufficiently large $q$, the truncation mechanism does not alter the local convergence of the VI sequence to $P^*$; see~\cite{Bian2016} for details.

We now recall the model-free state-feedback ADP schemes for solving the LQR problem without using the system matrices $A$ and $B$. In this subsection, the full state $x(t)$ is assumed to be measurable. The purpose is to show how the model-based PI and VI equations can be transformed into data-driven Bellman regression equations~\cite{Jiang2012,Bian2016}.

For a sequence of time instants $t_0,t_1,\ldots,t_l$, define the data stack operators, for compatible signals $a(t)$ and $b(t)$, as
\[
\begin{aligned}
\delta_{aa}
&=\left[\operatorname{vecv}(a(t_1))-\operatorname{vecv}(a(t_0)),\ldots,
\operatorname{vecv}(a(t_l))-\operatorname{vecv}(a(t_{l-1}))\right]\T,\\
\Gamma_{ab}
&=\left[\int_{t_0}^{t_1}a\otimes b\,d\tau,\ldots,
\int_{t_{l-1}}^{t_l}a\otimes b\,d\tau\right]\T,\\
I_{aa}
&=\left[\int_{t_0}^{t_1}\operatorname{vecv}(a)\,d\tau,\ldots,
\int_{t_{l-1}}^{t_l}\operatorname{vecv}(a)\,d\tau\right]\T,\\
I_{au}
&=\left[\int_{t_0}^{t_1}a\otimes Ru\,d\tau,\ldots,
\int_{t_{l-1}}^{t_l}a\otimes Ru\,d\tau\right]\T.
\end{aligned}
\]
In model-free PI, at the $k$-th iteration, rewrite the system~\eqref{eq:plant_system} as
\begin{equation}
\dot{x}=A_kx+B(u-K_kx),
\qquad
A_k=A+BK_k.
\label{eq:policy_shifted_system}
\end{equation}
For a fixed policy $K_k$, combining the PI relations~\eqref{eq:state_feedback_pi_evaluation} and~\eqref{eq:state_feedback_pi_improvement} with~\eqref{eq:policy_shifted_system}, and integrating the derivative of the value function $V_k(x)=x^\top P_kx$ over $[t,t+\Delta t]$, gives
\begin{equation}
\begin{aligned}
&x^\top(t+\Delta t)P_kx(t+\Delta t)-x^\top(t)P_kx(t) \\
&=\int_t^{t+\Delta t}
\big[-x^\top(\tau)(Q+K_k^\top RK_k)x(\tau) \\
&\qquad{}-2\big(u(\tau)-K_kx(\tau)\big)^\top RK_{k+1}x(\tau)\big]d\tau .
\end{aligned}
\label{eq:model_free_pi_integral}
\end{equation}
It can be written in the regression form
\begin{equation}
\Psi_k
\begin{bmatrix}
\operatorname{vecs}(P_k)\\
\operatorname{vec}(K_{k+1})
\end{bmatrix}
=
\Phi_k,
\label{eq:model_free_pi_regression}
\end{equation}
where $\Psi_k=\big[\delta_{xx},\ -2\Gamma_{xx}\big(I_n\otimes K_k\T R\big)+2\Gamma_{xu}\big]$ and $\Phi_k=-\Gamma_{xx}\operatorname{vecs}\big(Q+K_k\T RK_k\big)$. Hence, $P_k$ and $K_{k+1}$ can be obtained by least squares without explicitly using $A$ and $B$. Repeating this procedure yields a sequence of feedback gains that converges to the optimal state-feedback gain under the standard rank condition and an initially stabilizing policy.

The state-feedback PI method removes the need for system matrices, but it still requires a stabilizing initial gain $K_0$. To avoid this requirement, model-free VI can be used. Starting from any positive definite matrix $P_0>0$, define $H_k=A^\top P_k+P_kA$ and $K_k=-R^{-1}B^\top P_k$, along the system trajectory, we obtain
\begin{equation}
\begin{aligned}
&x^\top(t+\Delta t)P_kx(t+\Delta t)-x^\top(t)P_kx(t)\\
&=\int_t^{t+\Delta t}
\big[x^\top(\tau)H_kx(\tau)
-2u^\top(\tau)RK_kx(\tau)\big] \,d\tau .
\end{aligned}
\label{eq:model_free_vi_integral}
\end{equation}
After stacking the data over all time intervals, one obtains
\begin{equation}
\Omega
\begin{bmatrix}
\operatorname{vecs}(H_k)\\
\operatorname{vec}(K_k)
\end{bmatrix}
=
\Theta_k,
\label{eq:model_free_vi_regression}
\end{equation}
where $\Omega=\begin{bmatrix} I_{xx} & -2I_{xu}\end{bmatrix}$ and $\Theta_k=\delta_{xx}\operatorname{vecs}(P_k)$. Solving this linear regression equation gives $H_k$ and $K_k$ from state-input data. The value matrix is then updated by
\begin{equation}
P_{k+1}=P_k+\epsilon_k\left(H_k+Q-K_k^\top RK_k\right),
\label{eq:model_free_vi_update}
\end{equation}
where the stepsize and bounded-set sequence satisfy the VI convergence conditions in~\eqref{eq:model_based_vi_conditions}. Therefore, model-free VI does not require the system matrices $(A,B)$, nor does it require an initially stabilizing feedback gain.

\subsection{Dynamic Output Feedback State Parametrization}

The above model-free PI and VI schemes provide the basis for continuous-time LQR learning. However, they both rely on the availability of the full state $x(t)$ in the data stack operators and in the Bellman regression equations. In many practical systems, the internal state is not measurable and only the input-output data $u(t)$ and $y(t)$ are available.

To overcome this difficulty, the dynamic output feedback learning method in~\cite{Rizvi2020} introduces a filtered input--output vector to parameterize the unmeasured state. We recall this construction because it provides the notation and channelwise filtered-variable parametrization needed in the subsequent dependence analysis and reduced formulation.

Let
\begin{equation}
    \Lambda(s)=s^n+\alpha_{n-1}s^{n-1}+\cdots+
    \alpha_1s+\alpha_0
    \label{eq:stable_filter_polynomial}
\end{equation}
be a user-selected Hurwitz polynomial. Let $A_f\in\R^{n\times n}$ be the companion matrix associated with $\Lambda(s)$, and let $b_f=\begin{bmatrix}0&0&\cdots&0&1\end{bmatrix}\T$. For each input component $u_i(t)$, $i=1,\ldots,m$, and each output component $y_j(t)$, $j=1,\ldots,p$, define the filtered variables
\begin{subequations}
\label{eq:z_system}
\begin{align}
    \dot{\zeta}_{u_i}(t)&=A_f\zeta_{u_i}(t)+b_fu_i(t),
    &\zeta_{u_i}(t)&\in\R^n,
    \label{eq:input_filter_dynamics}\\
    \dot{\zeta}_{y_j}(t)&=A_f\zeta_{y_j}(t)+b_fy_j(t),
    &\zeta_{y_j}(t)&\in\R^n.
    \label{eq:output_filter_dynamics}
\end{align}
\end{subequations}
Collect them as $\zeta_u(t)=\col\big(\zeta_{u_1}(t),\ldots,\zeta_{u_m}(t)\big)\in\R^{nm}$ and $\zeta_y(t)=\col\big(\zeta_{y_1}(t),\ldots,\zeta_{y_p}(t)\big)\in\R^{np}$, and the full filtered input--output vector is defined by
\begin{equation}
    z(t)=\col\big(\zeta_u(t),\zeta_y(t)\big)
    \in\R^{n(m+p)}.
    \label{eq:z_full}
\end{equation}

The following state-parametrization lemma and its observer-based proof are recalled from~\cite{Rizvi2020}, rather than presented as a new contribution. The proof is retained for completeness and to make explicit the filtered variables and mapping matrices used in the subsequent proofs.
\begin{lemma}
Consider system~\eqref{eq:plant_system}, under Assumption~\ref{ass:observable}, there exist constant matrices $M_u\in\R^{n\times nm}$ and $M_y\in\R^{n\times np}$ such that
\begin{equation}
    x(t)=M_u\zeta_u(t)+M_y\zeta_y(t)+\varepsilon(t),
    \label{eq:state_param_original}
\end{equation}
where $\varepsilon(t)$ is a transient term satisfying $\lim_{t\to\infty}\varepsilon(t)=0$. Equivalently, with $M=\begin{bmatrix}M_u&M_y\end{bmatrix}$, the state can be written in terms of the full filtered vector $z(t)$. Therefore, after the filter transient vanishes, the state can be asymptotically parameterized by
\begin{equation}
    x(t)\approx Mz(t).
    \label{eq:full_state_parametrization}
\end{equation}
\label{lem:io_state_parametrization}
\end{lemma}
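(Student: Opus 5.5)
Since the paper describes the proof as observer-based, I will construct a Luenberger observer whose closed-loop matrix has characteristic polynomial $\Lambda(s)$, and then express the observer state through the filter states.

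\emph{Step 1 (observer).} By Assumption~\ref{ass:observable} the eigenvalues of $A+LC$ can be assigned arbitrarily, so choose $L\in\R^{n\times p}$ with $\det(sI-A-LC)=\Lambda(s)$. Let $A_o=A+LC$, which is Hurwitz because $\Lambda$ is Hurwitz. Rewrite the plant as
\[
\dot x=A_o x+Bu-Ly .
\]
Consider the observer $\dot{\hat x}=A_o\hat x+Bu-Ly$ with $\hat x(0)=0$. The error $\varepsilon=x-\hat x$ then satisfies $\dot\varepsilon=A_o\varepsilon$, so $\varepsilon(t)\to0$ exponentially. It remains to show that $\hat x=M_u\zeta_u+M_y\zeta_y$ for constant matrices.

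\emph{Step 2 (transfer-function decomposition).} With zero initial conditions,
\[
\hat x(s)=(sI-A_o)^{-1}Bu(s)-(sI-A_o)^{-1}Ly(s)=\frac{\operatorname{adj}(sI-A_o)}{\Lambda(s)}\big(Bu(s)-Ly(s)\big).
\]
Every entry of $\operatorname{adj}(sI-A_o)B$ and $\operatorname{adj}(sI-A_o)L$ is a polynomial of degree at most $n-1$. So column $i$ of $\operatorname{adj}(sI-A_o)B$ can be written as $U_i\,[1,s,\ldots,s^{n-1}]\T$ with $U_i\in\R^{n\times n}$ constant, and column $j$ of $-\operatorname{adj}(sI-A_o)L$ as $Y_j\,[1,s,\ldots,s^{n-1}]\T$ with $Y_j\in\R^{n\times n}$ constant.

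On the filter side, the companion realization $(A_f,b_f)$ in controllable canonical form gives
\[
(sI-A_f)^{-1}b_f=\frac{[1,s,\ldots,s^{n-1}]\T}{\Lambda(s)}.
\]
Hence, with zero initial filter states, $\zeta_{u_i}(s)=[1,\ldots,s^{n-1}]\T u_i(s)/\Lambda(s)$, and similarly for $\zeta_{y_j}$. Therefore
\[
\hat x=\sum_i U_i\zeta_{u_i}+\sum_j Y_j\zeta_{y_j},
\]
and we set $M_u=[U_1,\ldots,U_m]$ and $M_y=[Y_1,\ldots,Y_p]$.

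\emph{Step 3 (initial conditions and the transient).} The identity in Step 2 should be justified in the time domain rather than only through Laplace transforms. Define $w=M_u\zeta_u+M_y\zeta_y$ and show that $w$ obeys the same ODE as $\hat x$, namely $\dot w=A_ow+Bu-Ly$. Because $\hat x$ and the $\zeta$'s are driven by identical inputs, this follows from the zero-state identity, since the impulse responses coincide. The difference $\hat x-w$ then solves $\dot e=A_oe$, or at worst is a linear combination of modes of $A_f$ and $A_o$ arising from nonzero filter initial conditions; both matrices are Hurwitz, so the difference decays. Absorbing it and the observer error into $\varepsilon(t)$ gives~\eqref{eq:state_param_original} with $\varepsilon(t)\to0$.

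The main obstacle is this last step: making the transient argument rigorous for arbitrary filter initial conditions. I would handle it through the cascade $(\hat x,\zeta_u,\zeta_y)$. The map $(\zeta_u,\zeta_y)\mapsto w$ gives the zero-state response, and any initial-condition mismatch propagates only through exponentially stable dynamics.
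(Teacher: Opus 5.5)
Your proposal is correct and matches the paper's proof. Both use an observer whose error matrix has characteristic polynomial $\Lambda(s)$; your $A+LC$ with $-Ly$ is just the sign-flipped version of the paper's $A-LC$ with $+Ly$. Both then expand the columns of $\operatorname{adj}(sI-A_o)B$ and $\operatorname{adj}(sI-A_o)L$ in the basis $[1,s,\ldots,s^{n-1}]\T$ to match $(sI-A_f)^{-1}b_f$.

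Your Step 3 handles nonzero filter initial conditions, which the paper tacitly assumes are zero, and it closes without hedging. Equal impulse responses give equal Markov parameters, and controllability of the companion pair $(A_f,b_f)$ then yields $U_ib_f=B_i$ and $U_iA_f=A_oU_i$ (similarly $Y_jb_f=-L_j$ and $Y_jA_f=A_oY_j$). Hence $w$ satisfies $\dot w=A_ow+Bu-Ly$ exactly, and $\hat x-w$ obeys $\dot e=A_oe$ with no mixed $A_f$/$A_o$ modes to worry about.
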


\begin{proof}
Following the observer-based construction in~\cite{Rizvi2020}, under Assumption~\ref{ass:observable}, there exists an observer gain $L\in\R^{n\times p}$ such that $A_L=A-LC$ is Hurwitz and has the characteristic polynomial $\Lambda(s)$. Consider the corresponding Luenberger observer dynamics
\begin{equation}
    \dot{\hat{x}}(t)=A_L\hat{x}(t)+Bu(t)+Ly(t).
    \label{eq:observer_equivalent_dynamics}
\end{equation}
Taking the Laplace transform of \eqref{eq:observer_equivalent_dynamics} and setting the observer initial condition to zero, we have
\begin{equation}
\begin{aligned}
    \hat{x}(t)
    &={\mathcal L}^{-1}\!\Bigg\{
    (sI-A_L)^{-1}BU(s)
    +(sI-A_L)^{-1}LY(s)
    \Bigg\} \\
    &={\mathcal L}^{-1}\!\Bigg\{
    \sum_{i=1}^{m}\frac{\operatorname{adj}(sI-A_L)B_iU_i(s)}{\Lambda(s)}
    +
    \sum_{j=1}^{p}\frac{\operatorname{adj}(sI-A_L)L_jY_j(s)}{\Lambda(s)}
    \Bigg\},
\end{aligned}
    \label{eq:observer_laplace_decomposition}
\end{equation}
where $B_i$ and $L_j$ denote the $i$th column of $B$ and the $j$th column of $L$, respectively. Consider the input-channel term
\begin{equation}
\begin{aligned}
    \frac{\operatorname{adj}(sI-A_L)B_i}{\Lambda(s)}U_i(s)
    &=
    \begin{bmatrix}
    \sum_{k=0}^{n-1}a_k^{1i}s^k\\
    \vdots\\
    \sum_{k=0}^{n-1}a_k^{ni}s^k
    \end{bmatrix}
    \frac{U_i(s)}{\Lambda(s)} \\
    &=
    \begin{bmatrix}
    a_0^{1i}&\cdots&a_{n-1}^{1i}\\
    \vdots&\ddots&\vdots\\
    a_0^{ni}&\cdots&a_{n-1}^{ni}
    \end{bmatrix}
    \begin{bmatrix}
    1\\ \vdots\\ s^{n-1}
    \end{bmatrix}
    \frac{U_i(s)}{\Lambda(s)} \\
    &=M_{u_i}\zeta_{u_i}(s).
\end{aligned}
\end{equation}
Here $\zeta_{u_i}(s)=\begin{bmatrix}1&\cdots&s^{n-1}\end{bmatrix}\T U_i(s)/\Lambda(s)$ is the Laplace-domain filtered signal generated by \eqref{eq:input_filter_dynamics}.
Similarly, for each output channel, there exists a constant matrix $M_{y_j}\in\R^{n\times n}$ such that
\begin{equation}
    {\mathcal L}^{-1}\!\left\{
    \frac{\operatorname{adj}(sI-A_L)L_j}{\Lambda(s)}Y_j(s)
    \right\}
    =M_{y_j}\zeta_{y_j}(t).
\end{equation}
Therefore,
\begin{equation}
    \hat{x}(t)=M_u\zeta_u(t)+M_y\zeta_y(t),
\end{equation}
where $M_u=\begin{bmatrix}M_{u_1}&\cdots&M_{u_m}\end{bmatrix}$ and $M_y=\begin{bmatrix}M_{y_1}&\cdots&M_{y_p}\end{bmatrix}$. Since $x(t)=\hat{x}(t)+\varepsilon(t)$, where $\varepsilon(t)$ is the observer error satisfying $\varepsilon(t)\to0$ because $A_L$ is Hurwitz, this proves \eqref{eq:state_param_original} and completes the proof.
\end{proof}

\subsection{Model-Free LQR Using Output Feedback}

Following the dynamic output feedback framework in~\cite{Rizvi2020}, this subsection recalls how the state-feedback Bellman equations are converted into filtered input--output equations. Let
$n_z=n(m+p)$ and define
$\bar P_k=M\T P_kM$, $\bar K_k=K_kM$, and
$\bar H_k=M\T H_kM$. Substituting $x(t)\approx Mz(t)$ into
\eqref{eq:model_free_pi_integral} gives the full filtered-vector PI equation
\begin{equation}
\begin{aligned}
&z\T(t+\Delta t)\bar P_kz(t+\Delta t)-z\T(t)\bar P_kz(t)\\
&=-\int_t^{t+\Delta t}\!\Big[y\T(\tau)Q_y y(\tau)
+z\T(\tau)\bar K_k\T R\bar K_kz(\tau)\Big]d\tau\\
&\quad{}-2\int_t^{t+\Delta t}\!\big(u(\tau)-\bar K_kz(\tau)\big)\T
R\bar K_{k+1}z(\tau)\,d\tau .
\end{aligned}
\label{eq:full_filtered_pi_integral}
\end{equation}
Here the state-cost term has been written as $x\T Qx=y\T Q_y y$, so the equation only uses measured output data and the filtered vector $z(t)$. Stacking \eqref{eq:full_filtered_pi_integral} over all data intervals yields
\begin{equation}
\bar{\Psi}_k
\begin{bmatrix}
\operatorname{vecs}(\bar P_k)\\
\operatorname{vec}(\bar K_{k+1})
\end{bmatrix}
=
\bar{\Phi}_k,
\label{eq:full_filtered_pi_regression}
\end{equation}
where $\bar{\Psi}_k=\big[\delta_{zz},\ -2\Gamma_{zz}\big(I_{n_z}\otimes \bar K_k\T R\big)+2\Gamma_{zu}\big]$ and $\bar{\Phi}_k=-I_{yy}\operatorname{vecs}(Q_y)-\Gamma_{zz}\operatorname{vecs}\big(\bar K_k\T R\bar K_k\big)$. 
Similarly, substituting $x(t)\approx Mz(t)$ into \eqref{eq:model_free_vi_integral} gives
\begin{equation}
\begin{aligned}
&z\T(t+\Delta t)\bar P_kz(t+\Delta t)-z\T(t)\bar P_kz(t)\\
&=\int_t^{t+\Delta t}
\big[z\T(\tau)\bar H_kz(\tau)-2u\T(\tau)R\bar K_kz(\tau)\big] \,d\tau .
\end{aligned}
\label{eq:full_filtered_vi_integral}
\end{equation}
The corresponding full filtered-vector VI regression is
\begin{equation}
\bar{\Omega}
\begin{bmatrix}
\operatorname{vecs}(\bar H_k)\\
\operatorname{vec}(\bar K_k)
\end{bmatrix}
=
\bar{\Theta}_k,
\label{eq:full_filtered_vi_regression}
\end{equation}
where $\bar{\Omega}=\begin{bmatrix}I_{zz}&-2I_{zu}\end{bmatrix}$ and $\bar{\Theta}_k=\delta_{zz}\operatorname{vecs}(\bar P_k)$. Thus, the unmeasured state has been eliminated from both the PI and VI Bellman regressions. Existing filter-based output feedback learning therefore replaces the unknown state by the filtered vector $z(t)$ and transforms the Bellman equations into input--output regressions. However, the full-vector formulation implicitly regards all components of $z(t)$ as independent regressors. Since these filtered variables are generated through common system dynamics, this assumption is not obvious. The following section determines the intrinsic dimension of the filtered parametrization and shows that the complete filtered vector contains deterministic redundancy.

\section{Intrinsic Dimension of Filtered Input--Output Signals}

\subsection{Problem Formulation}
The full filtered-vector PI and VI regressions in~\eqref{eq:full_filtered_pi_regression} and~\eqref{eq:full_filtered_vi_regression} use the complete vector $z(t)\in\R^{n(m+p)}$ as the state parameterization and implicitly treat all its components as independent regression directions. However, because the filtered input and output variables are generated through common system dynamics, the number of independent directions in $z(t)$ need not equal its ambient dimension $n(m+p)$.

The objective of this section is therefore not merely to identify linear dependence among the filtered variables, but to determine the intrinsic dimension of the filter-based state parametrization, namely, the number of independent directions contained in the complete filtered vector. We first study the SIMO case and then extend the analysis to general MIMO systems. The main results show that the intrinsic dimension is $2n$ for SIMO systems and $n(m+1)$ for general MIMO systems.

\begin{breakableproblem}{Intrinsic Dimension of the Filtered Representation}
Given the filtered input--output vector $z(t)\in\R^{n(m+p)}$, characterize the dimension of the linear space spanned by its scalar components, and establish how this intrinsic dimension provides the theoretical basis for constructing a reduced Bellman regression.
\label{prob:filtered_intrinsic_dimension}
\end{breakableproblem}

\subsection{SIMO Systems}
We first consider the SIMO case, namely $m=1$. In this case, the full filtered vector is
\begin{equation}
    z(t)=\col\big(\zeta_u(t),\zeta_{y_1}(t),\ldots,\zeta_{y_p}(t)\big)
    \in\R^{n(1+p)}.
    \label{eq:simo_full_filtered_vector}
\end{equation}

In the Laplace domain, the $i$th output can be written as
\begin{equation}
    Y_i(s)=C_i(sI-A)^{-1}BU(s)=\frac{d_i(s)}{a(s)}U(s),
    \qquad i=1,\ldots,p,
\label{eq:simo_io_relation}
\end{equation}
where $a(s)=\det(sI-A)$ is the characteristic polynomial of $A$, and $d_i(s)=C_i\operatorname{adj}(sI-A)B$, with $C_i$ being the $j$th row of $C$. From the filter definition in~\eqref{eq:input_filter_dynamics} and~\eqref{eq:output_filter_dynamics} of the previous section, we obtain $(sI-A_f)^{-1}b_f=\begin{bmatrix}
    1&s&\cdots&s^{n-1}
\end{bmatrix}\frac{1}{\Lambda(s)}$, so the filtered vectors can be written in the Laplace domain as
\begin{equation}
    \zeta_u(s)=
    \begin{bmatrix}1&s&\cdots&s^{n-1}\end{bmatrix}\T\frac{U(s)}{\Lambda(s)},
    \qquad
    \zeta_{y_i}(s)=
    \begin{bmatrix}1&s&\cdots&s^{n-1}\end{bmatrix}\T\frac{Y_i(s)}{\Lambda(s)}.
\label{eq:simo_filter_laplace}
\end{equation}
For $r=0,\ldots,n-1$, the components in \eqref{eq:simo_filter_laplace} associated with the power $s^r$ are $s^rU(s)/\Lambda(s)$ and $s^rY_i(s)/\Lambda(s)$, respectively.

Although its dimension is $n(1+p)$, all output-filtered variables are generated by the same input and the same system dynamics. Therefore, they may contain deterministic linear dependence. We discuss two cases.

\medskip
\subsubsection{Case I: An Observable Pair $(C_i,A)$ Exists}
\smallskip
Assume that there exists an index $i\in\{1,\ldots,p\}$ such that the pair $(C_i,A)$ is observable, so the SISO transfer function from $u$ to $y_i$ has no pole-zero cancellation. Equivalently,
\begin{equation}
    \gcd(a(s),d_i(s))=1.
    \label{eq:simo_single_channel_coprime}
\end{equation}

\begin{lemma}Suppose that $\gcd(a(s),d_i(s))=1$, where $\deg a(s)=n$ and $\deg d_i(s)\leq n-1$. Then for any $j\neq i$ and any $r=0,\ldots,n-1$, there exist polynomials $\alpha_{jr}(s)$ and $\beta_{jr}(s)$ such that
\begin{equation}
    \alpha_{jr}(s)a(s)+\beta_{jr}(s)d_i(s)=s^rd_j(s),
    \label{eq:simo_case_i_lemma_identity}
\end{equation}
where $\deg\alpha_{jr}(s)<n$ and $\deg\beta_{jr}(s)<n$.
\label{lem:simo_case_i_representation}
\end{lemma}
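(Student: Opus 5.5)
The plan is to use the B\'ezout identity together with division by $a(s)$, so that both degree bounds hold simultaneously. Because $\gcd(a(s),d_i(s))=1$ in the Euclidean domain $\R[s]$, there exist polynomials $\phi(s),\psi(s)$ with
\begin{equation*}
\phi(s)a(s)+\psi(s)d_i(s)=1 .
\end{equation*}
Multiplying both sides by $g(s)\triangleq s^rd_j(s)$ gives a representation $\tilde\alpha(s)a(s)+\tilde\beta(s)d_i(s)=s^rd_j(s)$ with $\tilde\alpha=g\phi$ and $\tilde\beta=g\psi$. These polynomials may have large degree, so the remaining work is to reduce them. If $d_j(s)\equiv0$, we take $\alpha_{jr}=\beta_{jr}=0$. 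Otherwise $\deg g\le r+n-1\le 2n-2$, since $\deg d_j\le n-1$; this degree bound is what makes the reduction succeed.

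To reduce, we divide $\tilde\beta$ by $a(s)$, which is monic of degree $n$: $\tilde\beta(s)=q(s)a(s)+\beta_{jr}(s)$ with $\deg\beta_{jr}<n$. We then set $\alpha_{jr}(s)=\tilde\alpha(s)+q(s)d_i(s)$. This is the standard shift along the kernel $(d_i,-a)$ of the map $(\alpha,\beta)\mapsto\alpha a+\beta d_i$, so the identity $\alpha_{jr}a+\beta_{jr}d_i=s^rd_j$ still holds.

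The main step, which is only a short degree count, is to verify that $\deg\alpha_{jr}<n$. From the identity,
\begin{equation*}
\alpha_{jr}(s)a(s)=s^rd_j(s)-\beta_{jr}(s)d_i(s).
\end{equation*}
On the right, $\deg(s^rd_j)\le 2n-2$, and $\deg(\beta_{jr}d_i)\le (n-1)+(n-1)=2n-2$. Hence the right-hand side has degree at most $2n-2$. If $\alpha_{jr}\neq0$, the left-hand side has degree $\deg\alpha_{jr}+n$, since $a$ is monic and $\R[s]$ has no zero divisors. Comparing the two gives $\deg\alpha_{jr}\le n-2<n$. If $\alpha_{jr}=0$, the bound holds trivially.

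This argument in fact gives the sharper bound $\deg\alpha_{jr}\le n-2$. The argument also never uses $j\ne i$. The coprimality hypothesis enters only through the existence of the B\'ezout pair, and everything else is division and degree bookkeeping, so I do not expect any real obstacle.
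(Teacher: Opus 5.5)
Your proposal is correct and follows essentially the same route as the paper's proof in Appendix~A. Both take a B\'ezout representation of $s^r d_j(s)$, divide the $d_i$-coefficient by $a(s)$ while shifting the $a$-coefficient to compensate, and then use the degree count $\deg\big(s^r d_j(s)-\beta(s)d_i(s)\big)\le 2n-2$ to force $\deg\alpha_{jr}(s)\le n-2$.
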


The proof is given in Appendix~\ref{app:simo_case_i_lemma_proof}.

\begin{proposition}If there exists an index $i\in\{1,\ldots,p\}$ such that the pair $(C_i,A)$ is observable, then every component of $\zeta_{y_j}(t)$, $j\neq i$, belongs to the span of the components of $\zeta_u(t)$ and $\zeta_{y_i}(t)$. Moreover, the $2n$ components of $\zeta_u(t)$ and $\zeta_{y_i}(t)$ are linearly independent. Consequently, the space spanned by the components of the full SIMO filtered vector has dimension $2n$.
\label{prop:simo_case_i_reduction}
\end{proposition}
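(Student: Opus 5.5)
The plan is to work entirely in the Laplace domain. Each scalar component of $z(t)$ will be identified with a rational function of the form $\pi(s)U(s)/(a(s)\Lambda(s))$, where $\pi(s)$ is a polynomial, and linear dependence among components will be converted into linear dependence among these numerator polynomials. The justification is that the filters start at zero initial condition, so each component is $\mathcal{L}^{-1}$ of its transfer function times $U(s)$, and a linear combination of components vanishes identically for all admissible inputs exactly when the corresponding combination of transfer functions is zero. If plant initial conditions are nonzero, this identification holds up to a decaying transient, in the same sense as Lemma~\ref{lem:io_state_parametrization}.

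By \eqref{eq:simo_io_relation} and \eqref{eq:simo_filter_laplace}, the $r$th component of $\zeta_u$ corresponds to the numerator $s^r a(s)$, and the $r$th component of $\zeta_{y_j}$ corresponds to $s^r d_j(s)$, both over the common denominator $a(s)\Lambda(s)$. The statement therefore reduces to two claims about polynomials of degree at most $2n-1$.

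For the spanning claim, I would apply Lemma~\ref{lem:simo_case_i_representation}. Writing $\alpha_{jr}(s)=\sum_{k<n}\alpha_k s^k$ and $\beta_{jr}(s)=\sum_{k<n}\beta_k s^k$, the identity $s^r d_j=\alpha_{jr}a+\beta_{jr}d_i$ becomes, after dividing by $a\Lambda$ and multiplying by $U$,
\[
\frac{s^rY_j(s)}{\Lambda(s)}=\sum_{k<n}\alpha_k\frac{s^kU(s)}{\Lambda(s)}+\sum_{k<n}\beta_k\frac{s^kY_i(s)}{\Lambda(s)}.
\]
This expresses each component of $\zeta_{y_j}$ as a constant-coefficient combination of the components of $\zeta_u$ and $\zeta_{y_i}$.

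For the independence claim, suppose $\sum_k c_k s^k a(s)+\sum_k e_k s^k d_i(s)=0$, that is, $c(s)a(s)=-e(s)d_i(s)$ with $\deg c,\deg e<n$. Since $\gcd(a,d_i)=1$, $a$ divides $e$. But $\deg e<n=\deg a$, so $e=0$. Then $c\,a=0$ forces $c=0$. This is the standard Sylvester-resultant argument, and it shows the $2n$ numerators are linearly independent. Hence the components of $\zeta_u$ and $\zeta_{y_i}$ are linearly independent as signals, so their span has dimension exactly $2n$. Combined with the spanning claim, the span of all components of the full vector \eqref{eq:simo_full_filtered_vector} is this same $2n$-dimensional space.

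The main obstacle is making the passage between "linear dependence of time signals" and "linear dependence of transfer functions" rigorous. Sufficiency, from the polynomial identity to the signal identity, is immediate. The converse, that signal independence follows from polynomial independence, requires specifying that independence is meant identically over all inputs, or equivalently as elements of the space of rational functions. I would state this convention explicitly, noting that with a sufficiently rich input (for example, persistently exciting) the transfer-function independence carries over to the data.
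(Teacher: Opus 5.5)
Your proposal is correct and follows the paper's proof essentially step for step: the spanning claim comes from multiplying the identity of Lemma~\ref{lem:simo_case_i_representation} by $U(s)/(a(s)\Lambda(s))$, and independence comes from the coprimality argument. In that argument, $\gcd(a,d_i)=1$ and $\deg e<n$ force $e=0$ and then $c=0$ for the $2n$ numerator polynomials. Your explicit caveat, that signal independence should be read as independence over all inputs (that is, independence of the transfer functions), is a point the paper passes over with the single word ``equivalently.''
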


\begin{proof}
The proof consists of two steps.

\smallskip
\indent\textbf{Step 1: Representation of linearly dependent output-filtered components.}\quad
Lemma~\ref{lem:simo_case_i_representation} is used to show that each component of $\zeta_{y_j}(t)$, $j\neq i$, can be written as a linear combination of the components of $\zeta_u(t)$ and $\zeta_{y_i}(t)$. Indeed, multiplying \eqref{eq:simo_case_i_lemma_identity} by $U(s)/(a(s)\Lambda(s))$ and using $Y_\ell(s)=d_\ell(s)U(s)/a(s)$, $\ell=i,j$, gives
\begin{equation}
    \frac{s^rY_j(s)}{\Lambda(s)}
    =\frac{\alpha_{jr}(s)U(s)}{\Lambda(s)}
    +\frac{\beta_{jr}(s)Y_i(s)}{\Lambda(s)}.
    \label{eq:simo_case_i_filter_identity}
\end{equation}
Define
\begin{equation*}
    \alpha_{jr}(s)=\alpha_{jr,0}+\alpha_{jr,1}s+
    \cdots+\alpha_{jr,n-1}s^{n-1},
    \qquad
    \beta_{jr}(s)=\beta_{jr,0}+\beta_{jr,1}s+
    \cdots+\beta_{jr,n-1}s^{n-1},
\end{equation*}
and let $A_{jr}\triangleq[\alpha_{jr,0}\ \alpha_{jr,1}\ \cdots\ \alpha_{jr,n-1}]$ and $B_{jr}\triangleq[\beta_{jr,0}\ \beta_{jr,1}\ \cdots\ \beta_{jr,n-1}]$. Combining \eqref{eq:simo_filter_laplace} and \eqref{eq:simo_case_i_filter_identity} implies
\begin{equation*}
    [\zeta_{y_j}(s)]_{r+1}
    =A_{jr}\zeta_u(s)+B_{jr}\zeta_{y_i}(s).
\end{equation*}
Equivalently, in the time domain,
\begin{equation}
    [\zeta_{y_j}(t)]_{r+1}
    =A_{jr}\zeta_u(t)+B_{jr}\zeta_{y_i}(t).
    \label{eq:simo_case_i_component_time}
\end{equation}
Since \eqref{eq:simo_case_i_component_time} holds for every $r=0,\ldots,n-1$, each component of $\zeta_{y_j}(t)$, $j\neq i$, can be written as a linear combination of the components of $\zeta_u(t)$ and $\zeta_{y_i}(t)$.

\smallskip
\indent\textbf{Step 2: Linear independence of the reduced components.}\quad
We prove that $\zeta_u(t)$ and $\zeta_{y_i}(t)$ together form $2n$ linearly independent components. Equivalently, we need to show that the polynomials $a(s),sa(s),\ldots,s^{n-1}a(s),d_i(s),sd_i(s),\ldots,s^{n-1}d_i(s)$ are linearly independent. Suppose, to the contrary, that there exists a nonzero pair of row vectors $v\in\R^{1\times n}$ and $w\in\R^{1\times n}$ such that
\begin{equation}
    v
    \begin{bmatrix}
    a(s)\\
    sa(s)\\
    \vdots\\
    s^{n-1}a(s)
    \end{bmatrix}
    +w
    \begin{bmatrix}
    d_i(s)\\
    sd_i(s)\\
    \vdots\\
    s^{n-1}d_i(s)
    \end{bmatrix}
    =0.
    \label{eq:simo_case_i_linear_dependence_contradiction}
\end{equation}
Define $r(s)=v[1\ s\ \cdots\ s^{n-1}]\T$ and $q(s)=-w[1\ s\ \cdots\ s^{n-1}]\T$. Then $\deg r(s)\leq n-1$, $\deg q(s)\leq n-1$, and \eqref{eq:simo_case_i_linear_dependence_contradiction} can be rewritten as
\begin{equation}
    r(s)a(s)=q(s)d_i(s).
    \label{eq:simo_case_i_coprime_product_identity}
\end{equation}
Since $\gcd(a(s),d_i(s))=1$, the polynomial $a(s)$ must divide $q(s)$. However, $\deg q(s)\leq n-1$ and $\deg a(s)=n$, which implies $q(s)=0$. Substituting this into \eqref{eq:simo_case_i_coprime_product_identity} gives $r(s)a(s)=0$, and hence $r(s)=0$. From the definitions of $r(s)$ and $q(s)$, it follows that $v=0$ and $w=0$, contradicting the assumption that $(v,w)$ is nonzero. Hence, the above $2n$ polynomials are linearly independent, and therefore
\begin{equation}
    \dim\spanop\{a(s),sa(s),\ldots,s^{n-1}a(s),d_i(s),sd_i(s),\ldots,s^{n-1}d_i(s)\}=2n.
    \label{eq:simo_case_i_independent_set_dimension}
\end{equation}
Consequently, $\zeta_u(t)$ and $\zeta_{y_i}(t)$ contain $2n$ linearly independent components.
\end{proof}

Proposition~\ref{prop:simo_case_i_reduction} shows that the full SIMO filtered vector need not be retained in Case I. A reduced vector may be chosen as
\begin{equation}
    z_r(t)=\col\big(\zeta_u(t),\zeta_{y_i}(t)\big)\in\R^{2n}.
    \label{eq:simo_case_i_reduced_vector}
\end{equation}

\medskip
\subsubsection{Case II: No Observable Pair $(C_i,A)$ Exists}
\smallskip
For each $i=1,\ldots,p$, the pair $(C_i,A)$ is not observable. Hence, the transfer function $d_i(s)/a(s)$ has pole-zero cancellation, that is, $\gcd(a(s),d_i(s))\neq1$ for all $i=1,\ldots,p$.
However, under Assumption~\ref{ass:observable}, the output channels collectively contain all state information. This implies that the transfer function vector $G(s)=\frac{1}{a(s)}\begin{bmatrix} d_1(s) & d_2(s) &\cdots & d_p(s)\\
\end{bmatrix}\T$ has no common pole-zero cancellation.
Equivalently,
\begin{equation}
    \gcd(a(s),d_1(s),d_2(s),\ldots,d_p(s))=1.
    \label{eq:simo_joint_coprime}
\end{equation}

\begin{proposition}
For each output channel $y_i$, let $\rho_i(s)=\gcd(a(s),d_i(s))$,
and define the coprime denominator and numerator as
\begin{equation}
    \bar a_i(s)=\frac{a(s)}{\rho_i(s)},
    \qquad
    \bar d_i(s)=\frac{d_i(s)}{\rho_i(s)}.
    \label{eq:simo_case_ii_reduced_tf_definition}
\end{equation}
Let $\nu_i=\deg\bar a_i(s)<n$, the following two types of deterministic linear dependence hold.

First, within each output-filtered vector $\zeta_{y_i}(t)$, the higher-order components $[\zeta_{y_i}(t)]_{\nu_i+1},\ldots,[\zeta_{y_i}(t)]_n$ can be linearly dependent on $\zeta_u(t)$ and the first $\nu_i$ components of $\zeta_{y_i}(t)$. Therefore, only the first $\nu_i$ components of $\zeta_{y_i}(t)$ may contribute new independent directions beyond the input-filtered vector $\zeta_u(t)$.

Second, different output-filtered vectors  $\zeta_{y_i}(t)$ and $\zeta_{y_j}(t)$, $(j\neq i)$ are also linearly dependent. Thus, the independent output-filtered vector components in Case II are generally distributed among multiple output channels, rather than being contained in a single $\zeta_{y_i}(t)$.
\label{prop:simo_case_ii_dependence_structure}
\end{proposition}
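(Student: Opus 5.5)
Both claims reduce to polynomial identities in the Laplace domain, and I would argue exactly as in Proposition~\ref{prop:simo_case_i_reduction}. By \eqref{eq:simo_filter_laplace}, the components $[\zeta_u]_{r+1}$ and $[\zeta_{y_i}]_{r+1}$ equal $s^rU(s)/\Lambda(s)$ and $s^rY_i(s)/\Lambda(s)$. Multiplying by $a(s)\Lambda(s)/U(s)$ turns a linear relation among filtered signals into a relation among polynomials. The map is: $[\zeta_u]_{r+1}\mapsto s^ra(s)$, and $[\zeta_{y_i}]_{r+1}\mapsto s^rd_i(s)=s^r\rho_i(s)\bar d_i(s)$. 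Any polynomial identity of degree at most $2n-1$ then transfers back to a time-domain linear relation, as in Step~1 of Proposition~\ref{prop:simo_case_i_reduction}.

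\textbf{First claim (within-channel dependence).} Dividing $Y_i=d_iU/a$ by $\rho_i$ gives $\bar a_i(s)Y_i(s)=\bar d_i(s)U(s)$. Multiply by $s^k/\Lambda(s)$ for $k=0,\ldots,n-1-\nu_i$. Since $\bar a_i$ has degree $\nu_i$, and taking it monic, this yields
\[
\frac{s^{\nu_i+k}Y_i(s)}{\Lambda(s)}=\frac{s^k\bar d_i(s)U(s)}{\Lambda(s)}-\sum_{\ell<\nu_i}\bar a_{i,\ell}\frac{s^{\ell+k}Y_i(s)}{\Lambda(s)} .
\]
Because $\deg\bar d_i<\nu_i$, the term $s^k\bar d_i$ has degree less than $n$, so the first piece is a combination of components of $\zeta_u$. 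Inducting on $k$ expresses $[\zeta_{y_i}]_{\nu_i+k+1}$ through $\zeta_u$ and lower components of $\zeta_{y_i}$, and hence through $\zeta_u$ and the first $\nu_i$ components.

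A cleaner quotient formulation is also available. The span of $\zeta_u$ corresponds to $\mathcal U=\{r(s)a(s):\deg r<n\}$. Modulo $\mathcal U$, $s^rd_i$ depends only on the residue of $s^r\bar d_i$ modulo $\bar a_i$, because $s^r d_i=(s^r\bar d_i \bmod \bar a_i)\rho_i+q\,\bar a_i\rho_i$ with $\deg q<n$. This residue space has dimension $\nu_i$, and $\bar d_i$ is a unit modulo $\bar a_i$, so the first $\nu_i$ shifts already span it. The strict inequality $\nu_i<n$ follows from the Case~II hypothesis $\deg\rho_i\ge1$. I would also remark that the "can be" in the statement covers the situation where even fewer components are new.

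\textbf{Second claim (cross-channel dependence).} I would count dimensions. The full span corresponds to $\mathcal U+\sum_i\{q(s)d_i(s):\deg q<n\}$, which lies inside the space of polynomials of degree at most $2n-1$, so its dimension is at most $2n$. On the other hand, by the first claim the output channels contribute at most $\sum_i\nu_i$ new directions modulo $\mathcal U$, which has dimension $n$. If $\sum_i\nu_i>n$, the channels cannot be jointly independent modulo $\mathcal U$, which gives a nontrivial relation between $\zeta_{y_i}$ and $\zeta_{y_j}$ modulo $\zeta_u$.

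To make this concrete for a pair $i\neq j$, I would use $\operatorname{lcm}$ arguments. The element $\operatorname{lcm}(\bar a_i,\bar a_j)$-related combinations give polynomials $q_i,q_j$ of degree less than $n$ with $q_id_i-q_jd_j\in\mathcal U$. One way to get them is to take $q_i$ a multiple of $\bar a_i$, so that $q_id_i$ is already a multiple of $a$.

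\textbf{Main obstacle.} The obstacle is stating the second claim rigorously, since "generally" means it can fail. For instance, it fails if $\nu_i+\nu_j\le n$ and the residues are independent. So I would phrase it as follows. Joint coprimality \eqref{eq:simo_joint_coprime} means no single $\nu_i$ reaches $n$, while the total new dimension modulo $\mathcal U$ is exactly $n$. To prove this total, I would show the map $(q_1,\ldots,q_p)\mapsto\sum q_id_i \bmod a$ is onto $\mathbb R[s]/(a)$, using Bézout with $\gcd(a,d_1,\ldots,d_p)=1$ and reducing degrees as in Lemma~\ref{lem:simo_case_i_representation}. Therefore the independent output directions must be drawn from several channels, and any excess $\sum\nu_i-n>0$ forces cross-channel dependence.
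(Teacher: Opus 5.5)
Your argument for the first claim matches the paper's: cancel $\rho_i$ to get $\bar a_iY_i=\bar d_iU$, multiply by $s^k/\Lambda(s)$, and recurse. Your quotient refinement, that channel $i$ contributes exactly $\nu_i$ directions modulo $\mathcal U$, is also correct.

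The gap is in the second claim. The statement asserts that, for every pair $i\neq j$, the components of $\zeta_{y_i}(t)$ and $\zeta_{y_j}(t)$ are linearly dependent. Your argument does not deliver this:
\begin{itemize}
\item Your dimension count gives at best a relation among all channels jointly. It holds only modulo $\zeta_u$ and only when $\sum_i\nu_i>n$, so it never isolates the pair $(i,j)$.
\item Your concrete suggestion, taking $q_i$ to be a multiple of $\bar a_i$, makes $q_id_i$ a multiple of $a$ by itself. It therefore just reproduces the within-channel relation of the first claim and does not involve channel $j$. The ``$\operatorname{lcm}$ arguments'' are never made precise.
\item Your obstacle paragraph says the dependence can fail when $\nu_i+\nu_j\le n$. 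That is true for the notion you are actually using, independence of the reduced directions modulo $\mathcal U$, which happens exactly when $\operatorname{lcm}(\rho_i,\rho_j)=a$. It is false for the statement as written, so the proposal ends up disputing the claim rather than proving it.
\end{itemize}
The word ``generally'' in the statement qualifies where the independent components sit, not whether the cross-channel dependence exists.

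The missing idea is a one-line cross-multiplication. Since $Y_i=d_iU/a$ and $Y_j=d_jU/a$, you have $d_j(s)Y_i(s)=d_i(s)Y_j(s)$. In your polynomial language this is the choice $q_i=d_j$, $q_j=d_i$, for which $q_id_i-q_jd_j=0$ identically, not merely modulo $\mathcal U$. Because $\deg d_i,\deg d_j\le n-1$, dividing by $\Lambda(s)$ gives
\[
\sum_{\ell=0}^{n-1}d_{j,\ell}[\zeta_{y_i}(t)]_{\ell+1}-\sum_{\ell=0}^{n-1}d_{i,\ell}[\zeta_{y_j}(t)]_{\ell+1}=0 .
\]
Its coefficient vector is nonzero unless $d_i=d_j=0$, in which case both filtered vectors vanish and dependence is trivial. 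This relation holds for every pair, with no condition on $\nu_i,\nu_j$, and does not use $\zeta_u$; it is exactly the paper's proof.

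Your B\'ezout surjectivity argument (total new dimension $n$ while each $\nu_i<n$) is still worth keeping. It rigorously justifies the ``distributed among multiple output channels'' remark, which the paper only states informally, but it complements the pairwise identity rather than replacing it.
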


\begin{proof}
The first statement follows from
\begin{equation}
    Y_i(s)=\frac{d_i(s)}{a(s)}U(s)
    =\frac{\bar d_i(s)}{\bar a_i(s)}U(s),
\end{equation}
which gives
\begin{equation}
    \bar a_i(s)Y_i(s)=\bar d_i(s)U(s).
    \label{eq:simo_case_ii_reduced_io_identity}
\end{equation}
Choose $\rho_i(s)$ to be monic, since $\deg\bar d_i(s)\leq\nu_i-1$, the polynomials $\bar a_i(s)$ and $\bar d_i(s)$ can be written as
\begin{equation}
    \bar a_i(s)=s^{\nu_i}+\gamma_{i,\nu_i-1}s^{\nu_i-1}
    +\cdots+\gamma_{i,0},
    \qquad
    \bar d_i(s)=\delta_{i,\nu_i-1}s^{\nu_i-1}
    +\cdots+\delta_{i,0}.
\end{equation}
Dividing \eqref{eq:simo_case_ii_reduced_io_identity} by $\Lambda(s)$ on the both sides gives
\begin{equation}
    [\zeta_{y_i}(s)]_{\nu_i+1}
    =-\gamma_{i,\nu_i-1}[\zeta_{y_i}(s)]_{\nu_i}-\cdots-\gamma_{i,0}[\zeta_{y_i}(s)]_1
    +\delta_{i,\nu_i-1}[\zeta_u(s)]_{\nu_i}+\cdots+\delta_{i,0}[\zeta_u(s)]_1.
    \label{eq:simo_case_ii_internal_first_relation}
\end{equation}
Multiplying \eqref{eq:simo_case_ii_reduced_io_identity} by $s^k/\Lambda(s)$, $k=1,\ldots,n-\nu_i-1$, gives the same type of relation for the remaining higher-order components. Recursively, $[\zeta_{y_i}(t)]_{\nu_i+1},\ldots,[\zeta_{y_i}(t)]_n$ can be represented by $\zeta_u(t)$ and $[\zeta_{y_i}(t)]_1,\ldots,[\zeta_{y_i}(t)]_{\nu_i}$.

For the second statement, since
\begin{equation}
    Y_i(s)=\frac{d_i(s)}{a(s)}U(s),
    \qquad
    Y_j(s)=\frac{d_j(s)}{a(s)}U(s),
\end{equation}
we have
\begin{equation}
    d_j(s)Y_i(s)=d_i(s)Y_j(s).
    \label{eq:simo_case_ii_channel_cross_identity}
\end{equation}
Write
\begin{equation*}
    d_i(s)=d_{i,0}+d_{i,1}s+\cdots+d_{i,n-1}s^{n-1},
    \qquad
    d_j(s)=d_{j,0}+d_{j,1}s+\cdots+d_{j,n-1}s^{n-1},
\end{equation*}
dividing \eqref{eq:simo_case_ii_channel_cross_identity} by $\Lambda(s)$ and using the Laplace-domain filtered-output vectors components in \eqref{eq:simo_filter_laplace}, we obtain
\begin{equation}
    \sum_{\ell=0}^{n-1}d_{j,\ell}[\zeta_{y_i}(s)]_{\ell+1}
    -\sum_{\ell=0}^{n-1}d_{i,\ell}[\zeta_{y_j}(s)]_{\ell+1}=0.
    \label{eq:simo_case_ii_filtered_cross_relation}
\end{equation}
Thus, the components of $\zeta_{y_i}(s)$ and $\zeta_{y_j}(s)$, and equivalently those of $\zeta_{y_i}(t)$ and $\zeta_{y_j}(t)$ in the time domain, satisfy a deterministic linear dependence. This completes the proof.
\end{proof}

Proposition~\ref{prop:simo_case_ii_dependence_structure} shows that, when no individual pair $(C_i,A)$ is observable, the independent output-filtered components cannot generally be obtained by retaining one complete output-filtered vector. Instead, they may be distributed among multiple output channels. Nevertheless, Assumptions~\ref{ass:controllable} and~\ref{ass:observable} guarantee that an observable scalar virtual output can be formed as a linear combination of the measured outputs.

\begin{lemma}
Consider a SIMO system satisfying Assumptions~\ref{ass:controllable} and~\ref{ass:observable}. Then there exists a real vector $v\in\R^p$ such that $(v\T C,A)$ is observable.
\label{lem:simo_observable_output_combination}
\end{lemma}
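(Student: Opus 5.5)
Since $m=1$, $B\in\R^n$ is a single column. Assumption~\ref{ass:controllable} therefore says that $(A,B)$ is a controllable single-input pair. This forces $A$ to be cyclic (nonderogatory), i.e., its minimal polynomial equals $a(s)=\det(sI-A)$. The plan is to combine this cyclicity with Assumption~\ref{ass:observable} and show that a generic linear combination of the output rows is observable by itself.

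\textbf{Step 1 (transfer to polynomials).} For any $v\in\R^p$, the scalar output $v\T y$ has transfer function $v\T C\operatorname{adj}(sI-A)B/a(s)=d_v(s)/a(s)$ with $d_v(s)=\sum_{i}v_id_i(s)$. Because $(A,B)$ is controllable, the single-input realization $(A,B,v\T C)$ is minimal if and only if $(v\T C,A)$ is observable. Minimality holds if and only if $\gcd(a(s),d_v(s))=1$, since the McMillan degree of a SISO transfer function equals the degree of the reduced denominator. So it suffices to find $v$ with $\gcd(a,d_v)=1$, which is equivalent to requiring $d_v(\lambda)\neq0$ for every (complex) root $\lambda$ of $a(s)$.

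\textbf{Step 2 (avoid finitely many hyperplanes).} The joint coprimeness condition \eqref{eq:simo_joint_coprime}, which follows from Assumption~\ref{ass:observable} together with controllability, gives the following: for each root $\lambda_k$ of $a(s)$, $k=1,\ldots,q\le n$, the vector $c_k\triangleq(d_1(\lambda_k),\ldots,d_p(\lambda_k))\in\mathbb C^p$ is nonzero. If \eqref{eq:simo_joint_coprime} is not taken as already established, this can be justified directly by PBH: controllability of $(A,B)$ makes $\operatorname{adj}(\lambda_kI-A)B$ a nonzero eigenvector $w_k$ of $A$, because $\operatorname{adj}(\lambda_k I-A)$ has rank one for cyclic $A$. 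Observability then gives $Cw_k\neq0$, and $c_k=(Cw_k)\T$. For each $k$, the set $\{v\in\R^p: c_k v=0\}$ is the real part of a proper complex hyperplane. Writing $c_k=\Re c_k+\mathrm{i}\Im c_k$, this set is the intersection of two real subspaces, at least one of them proper. Hence it is a proper subspace of $\R^p$ and has Lebesgue measure zero. A finite union of proper subspaces cannot cover $\R^p$, so some $v$ avoids all of them. For this $v$, $d_v(\lambda_k)\neq0$ for all $k$, and Step 1 concludes the proof. In fact, almost every $v$ works.

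\textbf{Main obstacle.} The delicate point is the justification that each $c_k\neq0$. This is exactly where both assumptions and the cyclicity of $A$ in the single-input case are needed. Without controllability, a repeated eigenvalue with several Jordan blocks could make the minimal polynomial differ from $a(s)$. In that case no single scalar output could be observable, even though $(C,A)$ is observable. I would handle this carefully via the PBH/adjugate rank-one argument above. The remaining steps are routine.
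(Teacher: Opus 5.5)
Your proof is correct and shares the paper's skeleton. Controllability plus minimality reduce observability of $(v^\top C,A)$ to $\gcd\big(a(s),\sum_i v_id_i(s)\big)=1$, i.e., to the combined numerator being nonzero at every root of $a(s)$. This uses the fact that at each root some $d_i$ is nonzero. The differences are in the two supporting steps.

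\textbf{Choosing $v$.} The paper argues greedily. It reorders outputs and roots so that the supports of $N_1,N_2,\ldots$ successively cover all roots, then fixes $a_2,\ldots,a_\ell$ one at a time, each avoiding finitely many values. You replace this bookkeeping with a single genericity argument. Each bad set $\{v\in\R^p: c_kv=0\}$ is a proper real subspace, because the real or imaginary part of $c_k$ is a nonzero real functional. Finitely many proper subspaces cannot cover $\R^p$. Your version is shorter, needs no reordering, treats complex roots transparently, and shows that almost every $v$ works. The paper's version is more explicit, yielding a $v$ with $v_1=1$ supported on at most $\ell$ outputs.

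\textbf{Showing $c_k\neq0$.} The paper simply asserts that a minimal SIMO realization has no common pole--zero cancellation. You instead derive it from $\operatorname{adj}(\lambda_kI-A)=w_kq_k^\top$, which has rank one because single-input controllability makes $A$ cyclic. The two PBH tests then give $c_k=(q_k^\top B)(Cw_k)^\top\neq0$. In a write-up, state the left-eigenvector step $q_k^\top B\neq0$ explicitly: that, not cyclicity alone, is where controllability makes $\operatorname{adj}(\lambda_kI-A)B$ nonzero. Your observation that cyclicity of $A$ is the real structural requirement is a useful clarification that the paper leaves implicit.
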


\begin{proof}
Partition $C$ by rows as $C=\col(C_1\T,\ldots,C_p\T)$, where $C_i\in\R^n$. Since the system is controllable and observable, $(A,B,C)$ is a minimal realization. Its transfer-function vector is
\begin{equation}
    G(s)=\frac{1}{a(s)}
    \begin{bmatrix}d_1(s)&d_2(s)&\cdots&d_p(s)\end{bmatrix}\T,
\end{equation}
which has no common pole-zero cancellation.

Let $\lambda_1,\ldots,\lambda_s$ be the distinct roots of $a(s)$ and define
\begin{equation}
    N_i=\begin{bmatrix}
        d_i(\lambda_1)&d_i(\lambda_2)&\cdots&d_i(\lambda_s)
    \end{bmatrix}\T,\qquad i=1,\ldots,p.
\end{equation}
Because $G(s)$ has no common pole-zero cancellation, for each $j=1,\ldots,s$, at least one of $d_1(\lambda_j),\ldots,d_p(\lambda_j)$ is nonzero.

Without loss of generality, after reordering the rows of $C$ and the roots $\lambda_j$, assume that the first $\kappa_1>0$ entries of $N_1$ are nonzero and its remaining entries are zero. If $\kappa_1=s$, then $d_1(s)/a(s)$ has no pole-zero cancellation, and we can simply choose
\begin{equation}
    v=\begin{bmatrix}1&0&\cdots&0\end{bmatrix}\T.
\end{equation}

Suppose that $\kappa_1<s$. Then there exists some $i>1$ such that the $(\kappa_1+1)$th entry of $N_i$ is nonzero; otherwise, all the numerators would vanish at $\lambda_{\kappa_1+1}$. After reordering $N_2,\ldots,N_p$ if necessary, assume that this vector is $N_2$. By further reordering the remaining roots, assume that $N_2$ has nonzero entries from positions $\kappa_1+1$ to $\kappa_2$ and zero entries after position $\kappa_2$. If $\kappa_2<s$, the same argument can be repeated. Since the numerator vectors have no common zero, this process eventually yields some $\ell\leq p$ such that $\kappa_\ell=s$.

It remains to choose the coefficients of the output combination. There are only finitely many real values of $a_2$ for which $N_1+a_2N_2$ has a zero entry among its first $\kappa_2$ entries. Choose a nonzero real $a_2$ avoiding these values. After $a_2$ has been chosen, there are only finitely many real values of $a_3$ for which $N_1+a_2N_2+a_3N_3$ has a zero entry among its first $\kappa_3$ entries. Choose a nonzero real $a_3$ avoiding these values. Continuing in this way, we obtain nonzero real numbers $a_2,\ldots,a_\ell$ such that
\begin{equation}
    N_1+a_2N_2+\cdots+a_\ell N_\ell
\end{equation}
has no zero entry. Therefore, choose
\begin{equation}
    v=\begin{bmatrix}1&a_2&\cdots&a_\ell&0&\cdots&0\end{bmatrix}\T,
\end{equation}
with its entries returned to the original output order if the rows of $C$ were reordered. Then
\begin{equation}
    v\T C=C_1+a_2C_2+\cdots+a_\ell C_\ell,
\end{equation}
and the corresponding scalar transfer function is
\begin{equation}
    v\T C(sI-A)^{-1}B
    =\frac{d_1(s)+a_2d_2(s)+\cdots+a_\ell d_\ell(s)}{a(s)}.
\end{equation}
Its numerator is nonzero at every root of $a(s)$, so the transfer function has no pole-zero cancellation. Since $(A,B)$ is controllable, it follows that $(v\T C,A)$ is observable.
\end{proof}

Let $v$ be a vector whose existence is guaranteed by Lemma~\ref{lem:simo_observable_output_combination}, and define the virtual scalar output
\begin{equation}
    \widetilde y(t)=v\T y(t)=v\T Cx(t)
    \label{eq:simo_virtual_output}
\end{equation}
and its numerator polynomial
\begin{equation}
    \widetilde d(s)
    =v\T\begin{bmatrix}d_1(s)&\cdots&d_p(s)\end{bmatrix}\T
    =\sum_{i=1}^p v_i d_i(s).
    \label{eq:simo_virtual_numerator}
\end{equation}
The construction in the proof of Lemma~\ref{lem:simo_observable_output_combination} shows that the scalar transfer function $\widetilde d(s)/a(s)$ has no pole-zero cancellation. Therefore,
\begin{equation}
    \gcd\big(a(s),\widetilde d(s)\big)=1.
    \label{eq:simo_virtual_coprime}
\end{equation}
Moreover, the linearity of the filter in~\eqref{eq:output_filter_dynamics} gives
\begin{equation}
    \zeta_{\widetilde y}(t)=\sum_{i=1}^p v_i\zeta_{y_i}(t).
    \label{eq:simo_virtual_filtered_output}
\end{equation}

\begin{theorem}
Consider a SIMO system satisfying Assumptions~\ref{ass:controllable} and~\ref{ass:observable}. Define
\begin{equation}
    \mathcal S_{\rm SIMO}=\{s^ra(s):r=0,\ldots,n-1\}
    \cup\{s^rd_i(s):r=0,\ldots,n-1,\ i=1,\ldots,p\}.
    \label{eq:simo_polynomial_set}
\end{equation}
Then there exist a vector $v\in\R^p$ and the polynomial $\widetilde d(s)$ in~\eqref{eq:simo_virtual_numerator} such that
\begin{equation}
    \spanop\mathcal S_{\rm SIMO}
    =\spanop\{s^ra(s),s^r\widetilde d(s):r=0,\ldots,n-1\}.
    \label{eq:simo_virtual_span}
\end{equation}
Consequently,
\begin{equation}
    \dim\spanop\mathcal S_{\rm SIMO}=2n.
    \label{eq:simo_span_dimension}
\end{equation}
Equivalently, the complete SIMO filtered vector $z(t)\in\R^{n(1+p)}$ has exactly $2n$ linearly independent components.
\label{thm:simo_filtered_dimension}
\end{theorem}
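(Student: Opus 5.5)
The plan is to take $v$ from Lemma~\ref{lem:simo_observable_output_combination} and set $\widetilde d(s)=\sum_i v_id_i(s)$, so that \eqref{eq:simo_virtual_coprime} holds. The equality \eqref{eq:simo_virtual_span} then follows from two inclusions.

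\textbf{Inclusion ``$\supseteq$''.} This one is immediate. Each $s^r\widetilde d(s)=\sum_i v_i s^rd_i(s)$ is a linear combination of elements of $\mathcal S_{\rm SIMO}$, and each $s^ra(s)$ already belongs to $\mathcal S_{\rm SIMO}$.

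\textbf{Inclusion ``$\subseteq$''.} We must show that every $s^rd_j(s)$, $r=0,\ldots,n-1$, $j=1,\ldots,p$, lies in $\mathcal W\triangleq\spanop\{s^ra(s),s^r\widetilde d(s):r=0,\ldots,n-1\}$. For this, apply Lemma~\ref{lem:simo_case_i_representation} with $d_i$ replaced by $\widetilde d$. Its hypotheses hold: $\gcd(a,\widetilde d)=1$, $\deg a=n$, and $\deg\widetilde d\le n-1$, since each $d_i=C_i\operatorname{adj}(sI-A)B$ has degree at most $n-1$. The lemma's proof uses only coprimality and these degree bounds, not the fact that the second polynomial is one of the $d_i$. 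It therefore yields $\alpha_{jr},\beta_{jr}$ of degree $<n$ with $\alpha_{jr}a+\beta_{jr}\widetilde d=s^rd_j$.

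Expanding $\alpha_{jr}$ and $\beta_{jr}$ in the monomials $1,\ldots,s^{n-1}$ writes $s^rd_j$ as a combination of $s^ka(s)$ and $s^k\widetilde d(s)$ with $k\le n-1$, so $s^rd_j\in\mathcal W$. If one wants to avoid relying on the lemma's internal proof, the same fact can be obtained directly. The map $(\alpha,\beta)\mapsto\alpha a+\beta\widetilde d$ from pairs of polynomials of degree $<n$ into polynomials of degree $<2n$ is injective by the coprimality argument of Step~2 of Proposition~\ref{prop:simo_case_i_reduction}. Both spaces have dimension $2n$, so the map is onto, and in particular $\mathcal W$ contains every polynomial of degree $\le 2n-1$, including each $s^rd_j$ (degree $\le 2n-2$).

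\textbf{Dimension.} By the same Step~2 argument of Proposition~\ref{prop:simo_case_i_reduction}, applied with $\widetilde d$ in place of $d_i$, the $2n$ generators of $\mathcal W$ are linearly independent. Hence $\dim\spanop\mathcal S_{\rm SIMO}=2n$. In fact this shows $\spanop\mathcal S_{\rm SIMO}$ is the whole space of polynomials of degree $<2n$.

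\textbf{Translation to $z(t)$.} Each component $s^rU/\Lambda$ or $s^rY_i/\Lambda$ of $z$ equals $s^ra(s)\,U(s)/(a(s)\Lambda(s))$ or $s^rd_i(s)\,U(s)/(a(s)\Lambda(s))$, respectively. So linear relations among the components of $z$ correspond exactly to linear relations among the polynomials in $\mathcal S_{\rm SIMO}$, with the correspondence as in Proposition~\ref{prop:simo_case_i_reduction}. Thus $z$ has exactly $2n$ independent components, and a basis for them is given by $\zeta_u$ and $\zeta_{\widetilde y}$ from \eqref{eq:simo_virtual_filtered_output}.

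\textbf{Main obstacle.} The only real subtlety is justifying that the polynomial correspondence is faithful in the ``independent'' direction. Here I would argue, as the paper does, at the level of transfer functions (identically in $s$), treating the components as filtered responses to a generic input. I would not attempt a signal-level argument for specific data.
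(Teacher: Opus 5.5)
Your proposal is correct and follows the paper's proof essentially step for step. Both take $v$ from Lemma~\ref{lem:simo_observable_output_combination}, obtain ``$\subseteq$'' by applying Lemma~\ref{lem:simo_case_i_representation} with $\widetilde d$ in place of $d_i$, note that ``$\supseteq$'' is immediate, and use the coprimality argument for the independence of the $2n$ generators. Your optional dimension-count argument (injectivity of $(\alpha,\beta)\mapsto\alpha a+\beta\widetilde d$ between two $2n$-dimensional spaces) is a valid shortcut: it bypasses the appendix's B\'ezout-and-division step and also shows that $\spanop\mathcal S_{\rm SIMO}$ is the full space of polynomials of degree less than $2n$.
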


\begin{proof}
Choose $v\in\R^p$ as in Lemma~\ref{lem:simo_observable_output_combination} and define $\widetilde d(s)$ by~\eqref{eq:simo_virtual_numerator}. Then~\eqref{eq:simo_virtual_coprime} holds. For every $i=1,\ldots,p$ and $r=0,\ldots,n-1$, applying Lemma~\ref{lem:simo_case_i_representation} with $d_i(s)$ replaced by $\widetilde d(s)$ gives polynomials $\alpha_{ir}(s)$ and $\beta_{ir}(s)$, both of degree less than $n$, such that
\begin{equation}
    \alpha_{ir}(s)a(s)+\beta_{ir}(s)\widetilde d(s)
    =s^rd_i(s).
\end{equation}
Therefore,
\begin{equation}
    \spanop\mathcal S_{\rm SIMO}
    \subseteq\spanop\{s^ra(s),s^r\widetilde d(s):
    r=0,\ldots,n-1\}.
\end{equation}
Conversely, since $s^r\widetilde d(s)=\sum_{i=1}^p v_i s^rd_i(s)$, every polynomial generated by the virtual output belongs to $\spanop\mathcal S_{\rm SIMO}$. Hence,~\eqref{eq:simo_virtual_span} holds.

It remains to determine the dimension of this space. Suppose that polynomials $q(s)$ and $h(s)$ satisfy $\deg q(s)<n$, $\deg h(s)<n$, and
\begin{equation}
    q(s)a(s)+h(s)\widetilde d(s)=0.
\end{equation}
Since $\gcd(a(s),\widetilde d(s))=1$, $a(s)$ must divide $h(s)$. However, $\deg h(s)<n=\deg a(s)$, which implies $h(s)=0$ and consequently $q(s)=0$. Therefore, the $2n$ polynomials $a(s),\ldots,s^{n-1}a(s),\widetilde d(s),\ldots,s^{n-1}\widetilde d(s)$ are linearly independent, and~\eqref{eq:simo_span_dimension} follows.
\end{proof}

\begin{remark}
Lemma~\ref{lem:simo_observable_output_combination} does not imply that one of the original pairs $(C_i,A)$ must be observable. The observable scalar output $\widetilde y=v\T y$ may be a nontrivial linear combination of several measured outputs. Therefore, in Case II, although $\zeta_{\widetilde y}(t)$ provides a convenient device for proving the total intrinsic dimension, the independent components selected directly from the original filtered vector may still be distributed among multiple output-filtered vectors, as characterized in Proposition~\ref{prop:simo_case_ii_dependence_structure}. The subsequent QR procedure selects these components from the original filtered vector $z(t)$.
\end{remark}

This result completes the SIMO dimension analysis. The MIMO result below is established separately through polynomial row-vector subspaces.

\subsection{MIMO Systems}
We now consider the general MIMO case with $m$ inputs and $p$ outputs. In contrast to the SIMO case, each output channel is driven by all input channels. Therefore, the scalar numerator $d_j(s)$ becomes a polynomial row vector. Define $D(s)=C\operatorname{adj}(sI-A)B\in\R^{p\times m}[s]$, and let $D_j(s)$ be the $j$th row of $D(s)$, namely
\begin{equation}
    D_j(s)=C_j\operatorname{adj}(sI-A)B.
\end{equation}
Thus, 
\begin{equation}
    Y_j(s)=\frac{D_j(s)}{a(s)}U(s),
    \qquad
    U(s)=\begin{bmatrix}U_1(s)&\cdots&U_m(s)\end{bmatrix}\T,
\end{equation}
and the filtered input--output variables correspond to the polynomial row-vector set
\begin{equation}
    \mathcal S_{\rm MIMO}=\{s^ra(s)e_i:r=0,\ldots,n-1,\ i=1,\ldots,m\}
    \cup\{s^rD_j(s):r=0,\ldots,n-1,\ j=1,\ldots,p\},
    \label{eq:mimo_polynomial_row_set}
\end{equation}
where $e_i$ is the $i$th canonical row vector in $\R^{1\times m}$. Thus, determining the number of independent components of $z(t)$ is equivalent to determining $\dim\spanop\mathcal S_{\rm MIMO}$.

\begin{lemma}
Let $\mathcal U=\spanop\{s^ra(s)e_i:r=0,\ldots,n-1,\ i=1,\ldots,m\}$. We have
\begin{equation}
    \dim\mathcal U=nm.
    \label{eq:input_generated_subspace_dimension}
\end{equation}
\label{lem:mimo_input_generated_subspace}
\end{lemma}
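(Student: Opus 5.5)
The plan is to prove that the $nm$ generators $s^ra(s)e_i$, $r=0,\ldots,n-1$, $i=1,\ldots,m$, are linearly independent in $\R^{1\times m}[s]$. Since these generators span $\mathcal U$ by definition, independence gives $\dim\mathcal U\le nm$ and $\dim\mathcal U\ge nm$ simultaneously. The argument runs componentwise, reducing to the scalar fact already used in Step~2 of Proposition~\ref{prop:simo_case_i_reduction} and in Theorem~\ref{thm:simo_filtered_dimension}: a nonzero multiple of $a(s)$ cannot vanish.

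First, suppose real coefficients $c_{ir}$ satisfy $\sum_{i=1}^m\sum_{r=0}^{n-1}c_{ir}s^ra(s)e_i=\mathbf 0$. For each $i$, define $q_i(s)=\sum_{r=0}^{n-1}c_{ir}s^r$, a polynomial of degree at most $n-1$. The relation can then be rewritten as the polynomial row vector
\[
a(s)\begin{bmatrix}q_1(s)&\cdots&q_m(s)\end{bmatrix}=\mathbf 0 .
\]
Because the canonical row vectors $e_i$ occupy distinct coordinates, this forces $q_i(s)a(s)=0$ in $\R[s]$ for each $i$ separately.

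Second, since $a(s)=\det(sI-A)$ is monic of degree $n$, it is a nonzero polynomial. The ring $\R[s]$ has no zero divisors, so $q_i(s)=0$ for every $i$. This means all $c_{ir}=0$, and the generators are independent.

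There is an alternative route via degrees if one wants to avoid the integral-domain fact. The polynomial $s^ra(s)$ has leading term $s^{n+r}$, so within a fixed coordinate $i$ the $n$ polynomials $a(s),sa(s),\ldots,s^{n-1}a(s)$ have distinct degrees $n,\ldots,2n-1$. That makes their coefficient matrix triangular with unit pivots, hence of full rank.

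The argument presents no real obstacle. The only point requiring care is the decoupling across coordinates: a linear relation among row vectors supported on distinct $e_i$ splits into $m$ independent scalar relations. Noting this explicitly completes the proof that $\dim\mathcal U=nm$.
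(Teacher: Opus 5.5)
Your proof is correct and follows essentially the same route as the paper: you group the relation coordinatewise as $a(s)\begin{bmatrix}q_1(s)&\cdots&q_m(s)\end{bmatrix}=\mathbf 0$ and use that $a(s)\neq 0$ to force every $q_i(s)=0$, hence every $c_{ir}=0$. Your alternative distinct-degree (triangular coefficient matrix) argument is also valid but not needed.
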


\begin{proof}
It suffices to show that the $nm$ polynomial row vectors $s^ra(s)e_i$, $i=1,\ldots,m$, $r=0,\ldots,n-1$, are linearly independent. Suppose that
\begin{equation}
    \sum_{i=1}^{m}\sum_{r=0}^{n-1}c_{ir}s^ra(s)e_i=0.
    \label{eq:input_generated_subspace_linear_combination}
\end{equation}
Equivalently,
\begin{equation}
    a(s)\begin{bmatrix}
    \sum_{r=0}^{n-1}c_{1r}s^r &
    \cdots &
    \sum_{r=0}^{n-1}c_{mr}s^r
    \end{bmatrix}=0.
    \label{eq:input_generated_subspace_component_form}
\end{equation}
Since $a(s)$ is a nonzero polynomial, each scalar polynomial in the bracket must be zero. Therefore, $c_{ir}=0$ for all $i=1,\ldots,m$ and $r=0,\ldots,n-1$. Hence the $nm$ generators of $\mathcal U$ are linearly independent, and thus $\dim\mathcal U=nm$. This completes the proof.
\end{proof}

\begin{lemma}
Let $M(s)=\operatorname{adj}(sI-A)B$, for each output channel $j=1,\ldots,p$ and $r=0,\ldots,n-1$, the polynomial row vector $s^rD_j(s)$ satisfies
\begin{equation}
    s^rD_j(s)\equiv C_jA^rM(s)\pmod{\mathcal U}.
    \label{eq:mimo_output_mod_input_subspace}
\end{equation}
\label{lem:mimo_output_mod_input_subspace}
\end{lemma}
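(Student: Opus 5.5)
The plan is to use the standard identity for the resolvent expansion of the adjugate, $s^r\operatorname{adj}(sI-A)$, modulo multiples of $a(s)$. Let $a(s)=\det(sI-A)$. Then $(sI-A)\operatorname{adj}(sI-A)=a(s)I_n$, so
\[
s\,\operatorname{adj}(sI-A)=A\operatorname{adj}(sI-A)+a(s)I_n .
\]
I will prove by induction on $r\ge 0$ that
\begin{equation*}
s^r\operatorname{adj}(sI-A)=A^r\operatorname{adj}(sI-A)+a(s)\,\Pi_r(s),
\qquad \Pi_r(s)=\sum_{k=0}^{r-1}s^{r-1-k}A^k ,
\end{equation*}
where $\Pi_0=0$ and $\Pi_r$ is a matrix polynomial of degree $r-1$.

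The base case $r=0$ is trivial. For the inductive step, multiply the identity for $r$ by $s$. This gives $A^r\bigl(s\operatorname{adj}(sI-A)\bigr)+s\,a(s)\Pi_r(s)$. Substituting the displayed relation yields $A^{r+1}\operatorname{adj}(sI-A)+a(s)\bigl(A^r+s\Pi_r(s)\bigr)$, and indeed $A^r+s\Pi_r(s)=\Pi_{r+1}(s)$.

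Next, left-multiply by $C_j$ and right-multiply by $B$ to get
\[
s^rD_j(s)=C_jA^rM(s)+a(s)\,C_j\Pi_r(s)B .
\]
It remains to show that $a(s)C_j\Pi_r(s)B\in\mathcal U$. Its $i$th entry is $a(s)$ times the scalar polynomial $C_j\Pi_r(s)B_i$, which has degree at most $r-1\le n-2<n$. Writing this scalar polynomial as $\sum_{k=0}^{n-1}c_{ik}s^k$, the row vector equals $\sum_{i,k}c_{ik}s^ka(s)e_i$, which is a linear combination of the generators of $\mathcal U$. This gives \eqref{eq:mimo_output_mod_input_subspace}.

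The main point needing care is the degree bound $\deg\Pi_r\le r-1\le n-2$, since $\mathcal U$ only contains $a(s)$ times polynomials of degree below $n$. The explicit formula for $\Pi_r$ settles this directly, and the restriction $r\le n-1$ is exactly what makes it work. Note also that $C_jA^rM(s)$ itself has degree at most $n-1$ in each entry, which is consistent with how the representative will be used later.
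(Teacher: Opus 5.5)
Your proof is correct and follows the paper's route. Both start from $s\operatorname{adj}(sI-A)=A\operatorname{adj}(sI-A)+a(s)I$, iterate it to get the remainder $a(s)C_j\bigl(s^{r-1}I+\cdots+A^{r-1}\bigr)B$, and use the degree bound $r-1\le n-2$ to place this term in $\mathcal U$. The only cosmetic difference is that you make the induction explicit on the adjugate before multiplying by $B$, whereas the paper multiplies by $B$ first and says "applying repeatedly."
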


\begin{proof}
Recall that $D_j(s)=C_jM(s)$, from
\begin{equation}
    (sI-A)\operatorname{adj}(sI-A)=a(s)I,
    \label{eq:adjugate_characteristic_identity}
\end{equation}
we obtain
\begin{equation}
    s\operatorname{adj}(sI-A)=A\operatorname{adj}(sI-A)+a(s)I.
    \label{eq:adjugate_shift_identity}
\end{equation}
Multiplying \eqref{eq:adjugate_shift_identity} by $B$ gives
\begin{equation}
    sM(s)=AM(s)+a(s)B.
    \label{eq:mimo_m_shift_identity}
\end{equation}
Applying \eqref{eq:mimo_m_shift_identity} repeatedly yields, for $r=1,\ldots,n-1$,
\begin{equation}
    s^rD_j(s)=C_jA^rM(s)
    +a(s)C_j\big(s^{r-1}I+s^{r-2}A+\cdots+A^{r-1}\big)B.
    \label{eq:mimo_output_repeated_identity}
\end{equation}
For $r=0$, this identity reduces to $D_j(s)=C_jM(s)$. The second term in \eqref{eq:mimo_output_repeated_identity} is of the form $a(s)Q_{jr}(s)$, where $Q_{jr}(s)\in\R^{1\times m}[s]$ has degree no larger than $r-1\leq n-2$. Hence it belongs to $\mathcal U$, because it can be represented by the generators $s^\ell a(s)e_i$, $\ell=0,\ldots,n-1$, $i=1,\ldots,m$. Therefore,
\begin{equation}
    s^rD_j(s)\equiv C_jA^rM(s)\pmod{\mathcal U}.
\end{equation}
This completes the proof.
\end{proof}

\begin{lemma}
Define
\begin{equation}
    \mathcal M=\mathbb R^{1\times n}M(s):=\{v\T M(s): v\in \mathbb R^{n}\}.
    \label{eq:output_quotient}
\end{equation}
Under Assumption~\ref{ass:controllable}, we have
\begin{equation}
    \dim\mathcal M=n.
    \label{eq:output_quotient_dimension}
\end{equation}
\label{lem:mimo_output_quotient}
\end{lemma}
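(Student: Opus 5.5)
The map $v\T\mapsto v\T M(s)$ from $\R^{1\times n}$ onto $\mathcal M$ is linear and surjective by definition, so $\dim\mathcal M\le n$. It remains to show the map is injective: if $v\T M(s)=v\T\operatorname{adj}(sI-A)B\equiv 0$ as a polynomial row vector, then $v=0$. This is where Assumption~\ref{ass:controllable} enters.

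I would prove injectivity by dividing by $a(s)$. For all $s$ not in the spectrum of $A$ we have $M(s)/a(s)=(sI-A)^{-1}B$. So $v\T M(s)\equiv0$ gives $v\T(sI-A)^{-1}B=0$ on a cofinite set. Expand the resolvent for large $|s|$ as the Laurent series
\begin{equation*}
    (sI-A)^{-1}=\sum_{k\ge0}s^{-k-1}A^k.
\end{equation*}
Since the expansion is unique, every coefficient vanishes: $v\T A^kB=0$ for all $k\ge0$. In particular $v\T[B\ AB\ \cdots\ A^{n-1}B]=0$. The controllability matrix has full row rank $n$ by Assumption~\ref{ass:controllable}, hence $v=0$.

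An alternative purely polynomial route reuses the shift identity~\eqref{eq:mimo_m_shift_identity}, $sM(s)=AM(s)+a(s)B$. If $v\T M(s)=0$, multiplying by $s$ gives $v\T AM(s)+a(s)v\T B=0$. Comparing degrees, $\deg M\le n-1<n=\deg a$ forces $v\T B=0$ and $v\T AM(s)=0$. Iterating with $v\T A$ in place of $v\T$ yields $v\T A^kB=0$ for all $k$, with the same conclusion. The degree comparison needs care: $a(s)v\T B$ has degree exactly $n$ unless $v\T B=0$, while $v\T AM(s)$ has degree at most $n-1$. I expect this step to be the only delicate point, and the Laurent-series argument avoids it, so I would present that one.

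This gives $\dim\mathcal M=n$.
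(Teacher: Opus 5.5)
Your proposal is correct and follows the paper's proof essentially verbatim. Both establish injectivity of $v\mapsto v^{\top}M(s)$ by passing to $v^{\top}(sI-A)^{-1}B=0$, expanding the resolvent at infinity to get $v^{\top}A^kB=0$ for all $k$, and invoking controllability. Your alternative degree-comparison argument via the shift identity $sM(s)=AM(s)+a(s)B$ is also valid, but it is not needed.
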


\begin{proof}
The key is to show that the mapping $v\mapsto v\T M(s)$ is injective. If $v\T M(s)=0$, we can obtain
\begin{equation}
    v\T\operatorname{adj}(sI-A)B=0,
\end{equation}
which implies
\begin{equation}
    v\T(sI-A)^{-1}B=0.
    \label{eq:resolvent_zero_condition}
\end{equation}
Using the expansion $(sI-A)^{-1}=\frac{1}{s}I+\frac{1}{s^2}A+\frac{1}{s^3}A^2+\cdots$ gives
\begin{equation}
    v\T B=0,
    \quad v\T AB=0,
    \quad \ldots,
    \quad v\T A^{n-1}B=0.
\end{equation}
By Assumption~\ref{ass:controllable}, this gives $v=0$. Hence $\dim\mathcal M=n$.
\end{proof}

\begin{remark}
By Assumption~\ref{ass:observable}, the observability matrix $\mathcal O=[C\T\ (CA)\T\ \cdots\ (CA^{n-1})\T]\T$ has full column rank $n$. Therefore, the row vectors $C_jA^r$, $j=1,\ldots,p$, $r=0,\ldots,n-1$, span the whole row space $\mathbb R^{1\times n}$. It follows that $\spanop\{C_jA^rM(s):j=1,\ldots,p,\ r=0,\ldots,n-1\}=\mathcal M$. Thus, after removing the input-generated subspace $\mathcal U$, the output-filtered polynomial vectors in Lemma~\ref{lem:mimo_output_mod_input_subspace} generate exactly the quotient directions described by $\mathcal M$.
\end{remark}

\begin{lemma}
The input-generated subspace $\mathcal U$ and the output quotient space $\mathcal M$ satisfy
\begin{equation}
    \mathcal U\cap\mathcal M=\{0\}.
    \label{eq:input_output_trivial_intersection}
\end{equation}
\label{lem:mimo_input_output_independence}
\end{lemma}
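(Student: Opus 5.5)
We will show that any element lying in both $\mathcal U$ and $\mathcal M$ must vanish, using a degree-and-divisibility argument in the polynomial ring. Take an arbitrary element of the intersection. Membership in $\mathcal M$ means it has the form $v\T M(s)=v\T\operatorname{adj}(sI-A)B$ for some $v\in\R^n$. Membership in $\mathcal U$ means it has the form $a(s)q(s)$ with $q(s)\in\R^{1\times m}[s]$ of degree at most $n-1$. The goal is to show $v\T M(s)=0$; Lemma~\ref{lem:mimo_output_quotient} then even gives $v=0$.

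\textbf{Step 1: compare degrees.} Every entry of $\operatorname{adj}(sI-A)$ is a cofactor of $sI-A$, so it has degree at most $n-1$. Hence every entry of $v\T M(s)$ has degree at most $n-1$. Every nonzero entry of $a(s)q(s)$ is $a(s)$ times a nonzero polynomial, and $a(s)$ is monic of degree $n$, so such an entry has degree at least $n$. The identity $v\T M(s)=a(s)q(s)$ therefore forces every entry of $q(s)$ to be zero. Thus $q(s)=0$, and consequently $v\T M(s)=0$.

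\textbf{Step 2: conclude.} Since $v\T M(s)=0$, the common element is the zero row vector, so $\mathcal U\cap\mathcal M=\{0\}$. If needed, Lemma~\ref{lem:mimo_output_quotient} (which uses Assumption~\ref{ass:controllable}) further yields $v=0$. This is not required for the trivial-intersection statement, but it is what makes $\dim(\mathcal U+\mathcal M)=nm+n$ in the subsequent theorem.

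The only point needing care is the degree bound $\deg\operatorname{adj}(sI-A)\le n-1$. It should be stated explicitly: each entry of the adjugate is, up to sign, the determinant of an $(n-1)\times(n-1)$ submatrix of $sI-A$, and each entry of that submatrix has degree at most one in $s$. No other step is an obstacle. In particular, no coprimeness condition is needed here, unlike the SIMO argument, because the degree gap alone separates the two subspaces.
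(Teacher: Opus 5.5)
Your proposal is correct and follows essentially the same degree-comparison argument as the paper: a nonzero element $a(s)Q(s)$ of $\mathcal U$ has degree at least $n$, while $v^{\top}M(s)$ has degree at most $n-1$, so the common element must vanish. You are also right that $v=0$ (via Lemma~\ref{lem:mimo_output_quotient} and controllability) is not needed for the trivial intersection itself; the paper asserts it in passing, and it matters only for the dimension count in the subsequent theorem.
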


\begin{proof}
Suppose that a polynomial row vector belongs to both spaces. Then there exist $Q(s)$ and $v$ such that
\begin{equation}
    a(s)Q(s)=v\T M(s).
    \label{eq:input_m_intersection_identity}
\end{equation}
The left-hand side has degree at least $n$ if $Q(s)$ is nonzero, while the right-hand side has degree at most $n-1$. Hence $Q(s)=0$ and $v=0$. Therefore, the intersection is trivial.
\end{proof}

\begin{theorem}
For a MIMO system satisfying Assumptions~\ref{ass:controllable} and~\ref{ass:observable}, with state dimension $n$, input dimension $m$, and output dimension $p$, the complete filtered input--output vector
\begin{equation}
    z(t)\in\R^{n(m+p)}
\end{equation}
contains only $n(m+1)$ linearly independent components. Equivalently,
\begin{equation}
    \dim\spanop\mathcal S_{\rm MIMO}=n(m+1).
    \label{eq:mimo_span_dimension}
\end{equation}
\label{thm:mimo_filtered_dimension}
\end{theorem}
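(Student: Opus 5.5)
The plan is to show that $\spanop\mathcal S_{\rm MIMO}=\mathcal U\oplus\mathcal M$ and then read off the dimension as $nm+n$ from Lemmas~\ref{lem:mimo_input_generated_subspace}, \ref{lem:mimo_output_quotient} and~\ref{lem:mimo_input_output_independence}. The correspondence between components of $z(t)$ and elements of $\mathcal S_{\rm MIMO}$ is the same as in the SIMO case. Multiplying a polynomial row vector $f(s)$ by $U(s)/(a(s)\Lambda(s))$ gives the Laplace transform of a filtered component: $s^ra(s)e_i$ corresponds to $[\zeta_{u_i}]_{r+1}$, and $s^rD_j(s)$ corresponds to $[\zeta_{y_j}]_{r+1}$. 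A linear relation among the components of $z(t)$ that holds for all inputs is therefore the same as a linear relation among the corresponding polynomial row vectors. So it suffices to compute $\dim\spanop\mathcal S_{\rm MIMO}$.

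\textbf{Step 1 (inclusion $\subseteq$).} By construction, every generator $s^ra(s)e_i$ lies in $\mathcal U$. By Lemma~\ref{lem:mimo_output_mod_input_subspace}, $s^rD_j(s)=C_jA^rM(s)+u_{jr}(s)$ for some $u_{jr}(s)\in\mathcal U$. Since $C_jA^rM(s)\in\mathcal M$, every generator lies in $\mathcal U+\mathcal M$, hence $\spanop\mathcal S_{\rm MIMO}\subseteq\mathcal U+\mathcal M$.

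\textbf{Step 2 (inclusion $\supseteq$).} Clearly $\mathcal U\subseteq\spanop\mathcal S_{\rm MIMO}$. For $\mathcal M$, I use the remark following Lemma~\ref{lem:mimo_output_quotient}. By Assumption~\ref{ass:observable}, the rows $C_jA^r$ span $\R^{1\times n}$, so any $v\T M(s)$ can be written as a linear combination $\sum c_{jr}C_jA^rM(s)$. Each term equals $s^rD_j(s)-u_{jr}(s)$, which lies in $\spanop\mathcal S_{\rm MIMO}$ because $u_{jr}(s)\in\mathcal U\subseteq\spanop\mathcal S_{\rm MIMO}$. Hence $\mathcal U+\mathcal M\subseteq\spanop\mathcal S_{\rm MIMO}$.

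\textbf{Step 3 (dimension count).} By Lemma~\ref{lem:mimo_input_output_independence} the sum is direct, so
\[
\dim\spanop\mathcal S_{\rm MIMO}=\dim\mathcal U+\dim\mathcal M=nm+n=n(m+1),
\]
using Lemma~\ref{lem:mimo_input_generated_subspace} and Lemma~\ref{lem:mimo_output_quotient}. The latter is where Assumption~\ref{ass:controllable} enters.

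\textbf{Main obstacle.} With the lemmas available, the only delicate point is the claim that span dimension equals the number of linearly independent time-domain components of $z(t)$. The map from polynomial rows to filtered signals is linear, so every polynomial dependence gives a signal dependence. The converse needs the input to be generic, or persistently exciting, so that no extra relation appears in the data; I would state it in the same sense as the SIMO Theorem~\ref{thm:simo_filtered_dimension}, with components regarded modulo transients and as transfer functions from $U$. A small degree check is also needed in Step 1: the $\mathcal U$-part of $s^rD_j(s)$ has degree at most $n-2$ after factoring out $a(s)$, so it is expressible with the generators $s^\ell a(s)e_i$, $\ell\le n-1$. This is already handled in Lemma~\ref{lem:mimo_output_mod_input_subspace}.
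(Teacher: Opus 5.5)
Your proof is correct and follows the paper's argument: you decompose $\spanop\mathcal S_{\rm MIMO}$ into the input-generated subspace $\mathcal U$ and the output quotient directions $\mathcal M$, then add the dimensions $nm+n$ using the trivial-intersection lemma. Your Step~2 also states explicitly what the paper's terse proof leaves implicit: the observability argument (the remark after Lemma~\ref{lem:mimo_output_quotient}) that the output generators reach all of $\mathcal M$ modulo $\mathcal U$.
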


\begin{proof}
By Lemma~\ref{lem:mimo_input_generated_subspace}, the input-generated subspace has dimension $nm$. By Lemma~\ref{lem:mimo_output_mod_input_subspace}, the output-filtered polynomial vectors, after removing the input-generated subspace, are represented by elements of $\mathcal M$. By Lemma~\ref{lem:mimo_output_quotient}, this output quotient space has dimension $n$. By Lemma~\ref{lem:mimo_input_output_independence}, the input-generated subspace and the output quotient space are independent. Hence, their dimensions add to $nm+n=n(m+1)$, which proves the theorem.
\end{proof}

Theorem~\ref{thm:mimo_filtered_dimension} gives the exact intrinsic dimension of the conventional MIMO filtered representation. Although $z(t)$ lies in the ambient space $\R^{n(m+p)}$, its components span a subspace of dimension only $n(m+1)$. Thus, when $p>1$, the complete parametrization contains $n(p-1)$ redundant directions. Using the full $z(t)$ in the Bellman regression therefore introduces deterministically dependent regressors and may cause rank deficiency.

\begin{corollary}
Under the conditions of Theorem~\ref{thm:mimo_filtered_dimension}, there exists a selection matrix $S$ such that the $n(m+1)$ components of
\begin{equation}
    z_r(t)=Sz(t)\in\R^{n(m+1)}
    \label{eq:selection_reduced_filtered_vector}
\end{equation}
are linearly independent. Moreover, the state parametrization can be preserved in the reduced form
\begin{equation}
    x(t)\approx M_rz_r(t)
    \label{eq:reduced_state_parametrization_corollary}
\end{equation}
for some constant matrix $M_r$. 
\label{cor:reduced_filtered_vector}
\end{corollary}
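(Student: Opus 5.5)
The proof starts by turning the polynomial statement of Theorem~\ref{thm:mimo_filtered_dimension} back into a statement about signals. Each scalar component of $z(t)$ is identified with a polynomial row vector in $\mathcal S_{\rm MIMO}$. The components of $\zeta_{u_i}$ correspond to $s^r a(s) e_i$, because $s^rU_i/\Lambda = s^r a(s)e_i\,U/(a\Lambda)$. The components of $\zeta_{y_j}$ correspond to $s^rD_j(s)$, because $s^rY_j/\Lambda=s^rD_j(s)U/(a\Lambda)$. Under this identification, a linear relation among the polynomial vectors gives the same relation among the corresponding filtered signals, and conversely, as long as $U$ is generic. So the linear map sending a coefficient vector $c\in\R^{1\times n(m+p)}$ to $c\,z$ has the same kernel as the map sending $c$ to the corresponding combination in $\mathcal S_{\rm MIMO}$. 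Stack the $n(m+p)$ generators as the rows of a coefficient matrix $G$. Its rank equals $n(m+1)$ by Theorem~\ref{thm:mimo_filtered_dimension}.

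Next, $S$ is built by choosing $n(m+1)$ linearly independent rows of $G$, for instance by column-pivoted QR on $G\T$. $S$ is the $0/1$ matrix that extracts the corresponding entries of $z$. A concrete choice is all of $\zeta_u$, which is independent by Lemma~\ref{lem:mimo_input_generated_subspace}. To these we add $n$ output components $s^rD_j$ whose images $C_jA^r$ modulo $\mathcal U$ (Lemma~\ref{lem:mimo_output_mod_input_subspace}) form a basis of $\R^{1\times n}$. Such components exist by the observability Remark, since the rows $C_jA^r$ of $\mathcal O$ span $\R^{1\times n}$. Their independence modulo $\mathcal U$ then follows from Lemmas~\ref{lem:mimo_output_quotient} and~\ref{lem:mimo_input_output_independence}. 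Every remaining row of $G$ is a linear combination of the selected ones. Hence there is a constant matrix $T\in\R^{n(m+p)\times n(m+1)}$ with $G=TSG$, and so the identity $z(t)=Tz_r(t)$ holds for the filtered signals.

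For the state parametrization we apply Lemma~\ref{lem:io_state_parametrization}: $x(t)=Mz(t)+\varepsilon(t)=MTz_r(t)+\varepsilon(t)$. Setting $M_r=MT$ gives $x\approx M_rz_r$.

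The main obstacle is to justify that the polynomial identity transfers exactly to the time-domain filtered signals. The Laplace-domain argument assumes zero filter initial conditions and, for the output channels, uses $Y=(D/a)U$, which omits the free response from $x(0)$. We would first handle this as the paper does for $\varepsilon(t)$. Every relation $c\,z(t)=0$ then holds up to a term that decays exponentially, because $\Lambda$ is Hurwitz and the mismatch comes from initial conditions driven through stable filters; the free response of the plant is absorbed exactly as in Lemma~\ref{lem:io_state_parametrization}. Consequently $z(t)=Tz_r(t)+\tilde\varepsilon(t)$, and the conclusion holds in the same asymptotic sense ``$\approx$'' already used in \eqref{eq:full_state_parametrization}. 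The independence of the components of $z_r$ is meant for generic (persistently exciting) inputs. It follows because a nontrivial $c$ with $cSz\equiv0$ for all inputs would force $cSG=0$, which is impossible since the rows of $SG$ are independent.
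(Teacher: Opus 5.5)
Your argument is correct and is essentially the one the paper leaves implicit after Theorem~\ref{thm:mimo_filtered_dimension}: pick $n(m+1)$ independent components, write the remaining ones as $z=Tz_r$, and set $M_r=MT$ using Lemma~\ref{lem:io_state_parametrization}; your explicit choice of all of $\zeta_u$ plus $n$ output components whose $C_jA^r$ form a basis of $\R^{1\times n}$ mirrors the $\mathcal U$/$\mathcal M$ decomposition used in that theorem. Your initial-condition discussion goes beyond the paper, which works with zero initial conditions throughout, but the reason you give is not the right one: for unstable $A$ the filtered free response $s^rC_j(sI-A)^{-1}x(0)/\Lambda(s)$ does not decay, and the observer argument of Lemma~\ref{lem:io_state_parametrization} does not cover it; what actually makes $cz(t)\to0$ is that any relation $cG=0$ forces $\sum_{j,r}c_{jr}C_jA^r=0$ for the output-channel coefficients $c_{jr}$ of $c$ (by Lemmas~\ref{lem:mimo_output_quotient} and~\ref{lem:mimo_input_output_independence}, hence by controllability), so by \eqref{eq:adjugate_shift_identity} the $x(0)$-contribution collapses to a polynomial times $x(0)$ divided by $\Lambda(s)$, which decays.
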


The above result provides the theoretical basis for the reduced model-free output feedback formulation. In the next section, we show how to construct $z_r(t)$ directly from data and how to reformulate the PI and VI Bellman equations using $z_r(t)$ instead of the full vector $z(t)$, whose components are linearly dependent.

\section{Reduced Model-Free Output Feedback LQR}
Corollary~\ref{cor:reduced_filtered_vector} translates the intrinsic-dimension characterization into a reduced state parametrization for model-free LQR learning. In the original dynamic output feedback formulation in~\cite{Rizvi2020}, one directly uses $z(t)\in\R^{n(m+p)}$ to construct the Bellman equations. However, since the intrinsic dimension of $z(t)$ is only $n(m+1)$, the regression matrix generated by
$
    \vecs\big(z(t)z\T(t)\big), u(t)\otimes z(t)
$
in~\eqref{eq:full_filtered_pi_regression} and~\eqref{eq:full_filtered_vi_regression} is rank deficient. Therefore, even if the input--output data are sufficiently excited, the linear equations for solving the unknown matrices $\bar P_k$, $\bar K_k$, or $\bar H_k$ in~\eqref{eq:full_filtered_pi_regression} and~\eqref{eq:full_filtered_vi_regression} fail to have a unique solution due to the deterministic linear dependence inside $z(t)$.

To avoid this problem, the model-free LQR Bellman equations should be written in terms of $z_r(t)$, rather than the original full vector $z(t)$. By \eqref{eq:reduced_state_parametrization_corollary}, the value function and control law can be represented as
\begin{equation}
    V(t)=z_r\T(t)P_rz_r(t), \quad u(t)=K_rz_r(t),
\end{equation}
where $P_r\in \mathbb{R}^{n(m+1)\times n(m+1)}$ and $K_r \in\mathbb{R}^{m\times n(m+1)}$.

In practice, the independent components of $z(t)$ can be selected directly from data. Suppose the filtered data are collected at sampling instants $t_1,t_2,\ldots,t_N\ (N\ge n(m+1))$. Define $Z=\begin{bmatrix}z(t_1)&z(t_2)&\cdots&z(t_N)\end{bmatrix}\T\in\R^{N\times n(m+p)}$ and perform a QR decomposition with column pivoting,
let $r=n(m+1)$. Then the first $r$ pivot columns selected by the QR decomposition correspond to a linearly independent subset of the components of $z(t)$. With the selected index set $\mathcal I=\{i_1,i_2,\ldots,i_r\}$, the reduced filtered vector can be defined as
\begin{equation}
    z_r(t)=\begin{bmatrix}
    z_{i_1}(t)&z_{i_2}(t)&\cdots&z_{i_r}(t)
    \end{bmatrix}\T\in\R^r.
    \label{eq:qr_reduced_filtered_vector}
\end{equation}
\begin{remark}
    The step of QR decomposition is used only to determine which components of the linearly dependent filtered vector $z(t)$ should be retained. The selected index set $\mathcal I$ is fixed and used in all subsequent learning iterations.
\end{remark}

\subsection{Reduced PI}

Substituting the reduced state parametrization~\eqref{eq:reduced_state_parametrization_corollary} into the PI equation~\eqref{eq:model_free_pi_integral} gives the reduced PI equation
\begin{align}
&z_r\T(t)P_{r,k}z_r(t)
-z_r\T(t-T)P_{r,k}z_r(t-T)\nonumber\\
&= -\int_{t-T}^{t}y\T(\tau)Q_y y(\tau)d\tau
   -\int_{t-T}^{t}z_r\T(\tau)K_{r,k}\T R K_{r,k}z_r(\tau)d\tau\nonumber\\
&\quad -2\int_{t-T}^{t}
\left(u(\tau)-K_{r,k}z_r(\tau)\right)\T
RK_{r,k+1}z_r(\tau)d\tau.
\label{eq:reduced_pi_equation}
\end{align}
Stacking \eqref{eq:reduced_pi_equation} over all data intervals gives the reduced PI regression
\begin{equation}
\Psi_{r,k}
\begin{bmatrix}
\operatorname{vecs}(P_{r,k})\\
\operatorname{vec}(K_{r,k+1})
\end{bmatrix}
=
\Phi_{r,k},
\label{eq:reduced_pi_regression}
\end{equation}
where $\Psi_{r,k}=\big[\delta_{z_rz_r},-2\Gamma_{z_rz_r}(I_r\otimes K_{r,k}\T R)+2\Gamma_{z_ru}\big]$ and $\Phi_{r,k}=-I_{yy}\operatorname{vecs}(Q_y)-\Gamma_{z_rz_r}\operatorname{vecs}(K_{r,k}\T RK_{r,k})$.

Here, the number of unknowns is
$\frac{r(r+1)}{2}+mr$, where $r=n(m+1)$.
Therefore, compared with the original output feedback formulation using $z\in\R^{n(m+p)}$, the reduced formulation decreases the number of unknowns from
$\frac{n(m+p)\left(n(m+p)+1\right)}{2}+mn(m+p)$
to
$\frac{n(m+1)\left(n(m+1)+1\right)}{2}+mn(m+1)$.
More importantly, the reduced formulation removes the linear dependence in $z(t)$, thereby making the least-squares equation well posed under sufficient excitation. The complete reduced PI procedure is summarized in Algorithm~\ref{alg:reduced_pi}, where the QR-based reduction step implements the data-driven construction of $z_r(t)$ and the policy evaluation/improvement step is based on \eqref{eq:reduced_pi_regression}.

During data collection, the exploratory input is chosen as:
\begin{equation}
    u_0(t)=K_{r,0}z_r(t)+\nu(t),
    \label{eq:behavior_input_pi}
\end{equation}
where $K_{r,0}$ is stabilizing and $\nu(t)$ is the exploration signal. The resulting input--output data are used to generate $z(t)$ and are reused in the subsequent policy evaluation and improvement steps.

\begin{remark}
    It should be noted that PI requires an admissible initial policy. In the existing model-free LQR PI literature~\cite{Rizvi2020,Chen2025Homotopy,Lin2026}, this issue is handled either by considering open-loop stable systems or by introducing a homotopy-based initialization. For an open-loop stable system, the zero feedback gain is admissible, and hence the initial input becomes $u_0(t)=\nu(t)$.
\end{remark}

\begin{remark}
The above reduction procedure is completely data-based. It only uses the sampled filtered data matrix $Z$ and the data matrices generated from input--output data. Therefore, it does not require the knowledge of the system matrices $A$, $B$, or $C$.
\end{remark}

\begin{remark}
In implementation, the reduced data matrices such as $\delta_{z_rz_r}$, $\Gamma_{z_rz_r}$, and $\Gamma_{z_ru}$ in~\eqref{eq:reduced_pi_regression} are not directly reconstructed from the reduced vector $z_r(t)$ in~\eqref{eq:qr_reduced_filtered_vector}. Instead, the full matrices $\delta_{zz}$, $\Gamma_{zz}$, $\Gamma_{zu}$, $I_{zz}$, and $I_{zu}$ are first constructed using the original full vector $z(t)$. Then the reduced matrices are obtained by extracting the columns corresponding to the selected indices in $\mathcal I$. This avoids additional numerical integration errors caused by recomputing the integral terms from the sampled reduced data.
\end{remark}

\begin{algorithm}[!t]
\caption{Reduced Model-Free LQR PI Using Output Feedback}
\label{alg:reduced_pi}
\begin{algorithmic}[1]
\State \textbf{Initialize and Data Collection:} Select a stabilizing policy $u_0(t)$ in~\eqref{eq:behavior_input_pi}, and set $k\gets0$. Apply $u_0(t)$ and generate $z(t)$, $\delta_{zz}$, $\Gamma_{zz}$, $\Gamma_{zu}$, $I_{zz}$, and $I_{zu}$ from~\eqref{eq:z_system} on $[t_0,t_l]$, where $t_l=t_0+lT$. 
\State \textbf{QR-Based Reduction:} Perform a QR decomposition with column pivoting on $Z$, generate $z_r(t)$ from the selected index set $\mathcal I$, and fix this reduced vector for all learning iterations.
\State \textbf{Data Matrix Reconstruction:} Reconstruct the reduced data matrices, such as $\delta_{z_rz_r}$, $\Gamma_{z_rz_r}$, $\Gamma_{z_ru}$, $I_{z_rz_r}$, and $I_{z_ru}$, from the full data matrices according to the index set $\mathcal I$. 
\Loop
    \State \textbf{Evaluate and Improve Policy:} Solve \eqref{eq:reduced_pi_regression} for $P_{r,k}$ and $K_{r,k+1}$ by least squares.
    \If{$\|P_{r,k+1}-P_{r,k}\|<\varepsilon$}
        \State \Return $K_r=K_{r,k}$.
    \EndIf
    \State $k\gets k+1$.
\EndLoop
\end{algorithmic}
\end{algorithm}

\subsection{Reduced VI}

PI requires an initial stabilizing policy. To remove this requirement, the same reduced formulation can be used to derive a reduced VI method. Substituting the reduced state parametrization~\eqref{eq:reduced_state_parametrization_corollary} into the VI equation~\eqref{eq:model_free_vi_integral} gives the reduced VI equation
\begin{align}
&z_r\T(t)P_{r,k}z_r(t)
-z_r\T(t-T)P_{r,k}z_r(t-T)\nonumber\\
&=-\int_{t-T}^{t}y\T(\tau)Q_y y(\tau)d\tau
+\int_{t-T}^{t}z_r\T(\tau)H_{r,k}z_r(\tau)d\tau\nonumber\\
&\quad -2\int_{t-T}^{t}(Ru(\tau))\T K_{r,k}z_r(\tau)d\tau.
\label{eq:reduced_vi_equation}
\end{align}
Stacking \eqref{eq:reduced_vi_equation} over all data intervals gives the reduced VI regression
\begin{equation}
\Omega_r
\begin{bmatrix}
\operatorname{vecs}(H_{r,k})\\
\operatorname{vec}(K_{r,k})
\end{bmatrix}
=
\Theta_{r,k},
\label{eq:reduced_vi_regression}
\end{equation}
where $\Omega_r=\big[I_{z_rz_r}\ -2I_{z_ru}\big]$ and $\Theta_{r,k}=\delta_{z_rz_r}\operatorname{vecs}(P_{r,k})+I_{yy}\operatorname{vecs}(Q_y)$.
Here, the unknowns in this regression are $H_{r,k}$ and $K_{r,k}$. Once they are obtained by least squares, the tentative value matrix is computed by
\begin{equation}
    \widetilde P_{r,k+1}=P_{r,k}+\epsilon_k(H_{r,k}-K_{r,k}\T RK_{r,k}),
\end{equation}
where the stepsize sequence $\{\epsilon_k\}_{k=0}^\infty$ satisfies the same stepsize conditions as in~\eqref{eq:model_based_vi_conditions}. For the reduced formulation, the same bounded-set construction is used in $\mathbb{P}^{n(m+1)}_+$:
\begin{equation}
    \mathcal{B}_q\subseteq \mathcal{B}_{q+1},\quad q\in\mathbb{N},\qquad
    \lim_{q\to\infty}\mathcal{B}_q=\mathbb{P}^{n(m+1)}_+.
\end{equation}
The tentative matrix is accepted or reset using the same truncation mechanism described above. Following the VI strategy in~\cite{Rizvi2020}, the complete reduced VI procedure is given in Algorithm~\ref{alg:reduced_vi}.

\begin{algorithm}[!t]
\caption{Reduced Model-Free LQR VI Using Output Feedback}
\label{alg:reduced_vi}
\begin{algorithmic}[1]
\State \textbf{Initialize and Data Collection:} Select a control policy $u_0(t)$ in~\eqref{eq:behavior_input_pi} and $P_{r,0}>0$, and set $k\gets0$ and $q\gets0$.
Apply $u_0(t)$ and generate $z(t)$, $\delta_{zz}$, $\Gamma_{zz}$, $\Gamma_{zu}$, $I_{zz}$, and $I_{zu}$ from~\eqref{eq:z_system} on $[t_0,t_l]$, where $t_l=t_0+lT$. 
\State \textbf{QR-Based Reduction:} Perform a QR decomposition with column pivoting on $Z$, generate $z_r(t)$ from the selected index set $\mathcal I$, and fix this reduced vector for all learning iterations.
\State \textbf{Data Matrix Reconstruction:} Reconstruct the reduced data matrices, such as $\delta_{z_rz_r}$, $\Gamma_{z_rz_r}$, $\Gamma_{z_ru}$, $I_{z_rz_r}$, and $I_{z_ru}$, from the full data matrices according to the index set $\mathcal I$. 
\Loop
    \State \textbf{Loop:} Solve \eqref{eq:reduced_vi_regression} for $H_{r,k}$ and $K_{r,k}$ by least squares.
    \State \textbf{Update Value Matrix:} Compute $\widetilde P_{r,k+1}=P_{r,k}+\epsilon_k\left(H_{r,k}-K_{r,k}\T RK_{r,k}\right)$.
    \If{$\widetilde P_{r,k+1}\notin \mathcal{B}_q$}
        \State $P_{r,k+1}\gets P_{r,0}$ and $q\gets q+1$.
    \ElsIf{$\|\widetilde P_{r,k+1}-P_{r,k}\|/\epsilon_k<\varepsilon$}
        \State \Return $K_r=K_{r,k}$.
    \Else
        \State $P_{r,k+1}\gets\widetilde P_{r,k+1}$.
    \EndIf
    \State $k\gets k+1$.
\EndLoop
\end{algorithmic}
\end{algorithm}
Compared with reduced PI, reduced VI does not require an initially stabilizing policy. Therefore, for the exploratory input in~\eqref{eq:behavior_input_pi}, the initial feedback gain $K_{r,0}$ can be set to zero, so that one usually simply choose $u_0(t)=\nu(t)$, as in the PI case. However, compared with PI, it usually needs more iterations to converge.

\begin{remark}
It should be emphasized that the linear independence of $z_r(t)$ is a requirement, while the persistent excitation of the collected data is still needed to guarantee that the regression matrix has full column rank. The QR decomposition removes deterministic linear dependence among the filtered variables, and the exploration signal $\nu(t)$ provides sufficient excitation for solving the least-squares problem.
\end{remark}

\section{Numerical Example}

Consider the jet transport aircraft model, a classical $2\times2$ MIMO benchmark system taken from~\cite{MathWorksMIMOJet}, in the form of \eqref{eq:plant_system}, with the matrices $A$, $B$, and $C$ given as follows:

\[
\begin{aligned}
A &=
\begin{bmatrix}
-0.0558 & -0.9968 &  0.0802 & 0.0415 \\
 0.5980 & -0.1150 & -0.0318 & 0      \\
-3.0500 &  0.3880 & -0.4650 & 0      \\
 0      &  0.0805 &  1.0000 & 0
\end{bmatrix}, \\
B &=
\begin{bmatrix}
 0.0073 & 0      \\
-0.4750 & 0.0077 \\
 0.1530 & 0.1430 \\
 0      & 0
\end{bmatrix}, \qquad
C =
\begin{bmatrix}
0 & 1 & 0 & 0 \\
0 & 0 & 0 & 1
\end{bmatrix}.
\end{aligned}
\]
Here, $n=4$, $m=2$, and $p=2$. It can be seen that Assumptions~\ref{ass:controllable} and~\ref{ass:observable} hold for this example. The LQR weighting matrices are selected as $Q_y=I_2$ and $R=I_2$. The Hurwitz filter polynomial is chosen as $\Lambda(s)=(s+2)^4$.

To implement the proposed reduced model-free output feedback learning algorithm, the data are collected under the exploratory input $u_0(t)=K_{r,0}z_r(t)+\nu(t)$ with $K_{r,0}=\textbf{0}$, where $\nu(t)$ is a combination of sinusoidal signals with different frequencies as in~\cite{Jiang2012}.
Let $t_j=j \Delta t $, $j=0,1,\ldots,N$, where $\Delta t=0.02$~s and $N=500$. The data are collected over $t\in[0,10]$~s to construct the filtered input--output vector $z(t)$.

For this example, the original filtered vector satisfies $z(t)\in\mathbb R^{n(m+p)}=\mathbb R^{16}$.
However, after performing a column-pivoted QR decomposition on the filtered data matrix $Z$ defined above, the numerical rank is obtained as $\rank(Z)=12$, which agrees with the theoretical result. The first 12 pivot columns are retained to form the reduced filtered vector $z_r(t)\in\mathbb R^{12}$. The proposed reduced model-free output feedback VI algorithm is then implemented using $z_r(t)$.

To quantify the effect of the reduction on the Bellman regression, the full matrix $\bar\Omega$ in~\eqref{eq:full_filtered_vi_regression} and the reduced matrix $\Omega_r$ in~\eqref{eq:reduced_vi_regression} are constructed from the same data. Table~\ref{tab:regression_comparison} shows that $\bar\Omega$ has 168 columns but rank 102, whereas $\Omega_r$ has full column rank. Moreover, the condition number decreases from $3.6\times10^{17}$ to $2.4\times10^6$, indicating a substantial improvement in numerical conditioning.

\begin{table}[t]
\centering
\caption{Comparison of the VI Regression Matrices in~\eqref{eq:full_filtered_vi_regression} and~\eqref{eq:reduced_vi_regression}}
\label{tab:regression_comparison}
\scriptsize
\renewcommand{\arraystretch}{1.1}
\begin{tabular}{@{}lcccc@{}}
\toprule
Regression matrix & Unknowns & Rank & Columns & Condition no. \\
\midrule
$\bar\Omega$ in~\eqref{eq:full_filtered_vi_regression}
& 168 & 102 & 168 & $3.6\times10^{17}$ \\
$\Omega_r$ in~\eqref{eq:reduced_vi_regression}
& 102 & 102 & 102 & $2.4\times10^4$ \\
\bottomrule
\end{tabular}
\end{table}

The control parameters are initialized to zero. Step size is selected as $\epsilon_k=(k+5)^{-1}, k\in\mathbb{N}$ and the bounded set $\mathcal{B}_q=\{P\in \mathbb{P}_+^{12}:\|P\|\le 10000(q+1)\}, q\in \mathbb{N}$. Thus, under the convergence criterion, the learned reduced output feedback gain is obtained as follows, together with the model-based optimal reduced gain $K_r^*$ for comparison:
\begingroup
\scriptsize
\setlength{\arraycolsep}{2pt}
\renewcommand{\arraystretch}{0.92}
\[
\begin{gathered}
K_r=\begin{bmatrix}K_r^{1}&K_r^{2}\end{bmatrix},\qquad
K_r^*=\begin{bmatrix}K_r^{*,1}&K_r^{*,2}\end{bmatrix},\\[0.5mm]
K_r^{1}=\begin{bmatrix}
-84.7145 & 12.3461\\
-61.6217 &  7.8132\\
-16.0778 &  1.6581\\
 -1.8159 &  0.1708\\
  4.0055 & -0.5315\\
 -2.8204 &  0.4777
\end{bmatrix}^{\top},\quad
K_r^{2}=\begin{bmatrix}
  1.3874 & -0.3136\\
  0.1708 & -0.0373\\
 15.2226 & -4.0348\\
 28.3082 & -9.2522\\
 33.6469 &-11.0323\\
 49.0950 &-10.3153
\end{bmatrix}^{\top},\\[0.8mm]
K_r^{*,1}=\begin{bmatrix}
-84.7054 & 12.3622\\
-61.6154 &  7.8237\\
-16.0763 &  1.6604\\
 -1.8157 &  0.1710\\
  4.0117 & -0.5164\\
 -2.8144 &  0.4914
\end{bmatrix}^{\top},\quad
K_r^{*,2}=\begin{bmatrix}
  1.3892 & -0.3096\\
  0.1710 & -0.0368\\
 15.2165 & -4.0429\\
 28.3073 & -9.2572\\
 33.6444 &-11.0460\\
 49.0895 &-10.3290
\end{bmatrix}^{\top}.
\end{gathered}
\]
\endgroup
The relative Frobenius error between $K_r$ and $K_r^*$ is $3.08\times10^{-4}$, indicating close agreement with the model-based optimum. Consistent with the regression-matrix comparison in Table~\ref{tab:regression_comparison}, the full formulation exhibits repeated resets and fails to converge in Fig.~\ref{fig:full_numerical_convergence}, whereas the reduced formulation converges in Fig.~\ref{fig:numerical_convergence}. After learning, the exploration signal is removed, and the learned controller $u(t)=K_rz_r(t)$ is applied to the system. All state trajectories converge to zero in Fig.~\ref{fig:numerical_closed_loop}. Taken together, the improved regression rank and conditioning, the accurate learned gain, the convergence behavior, and the closed-loop response demonstrate the effectiveness of the proposed reduction method.

\begin{figure}[t]
\centering
\includegraphics[width=\columnwidth]{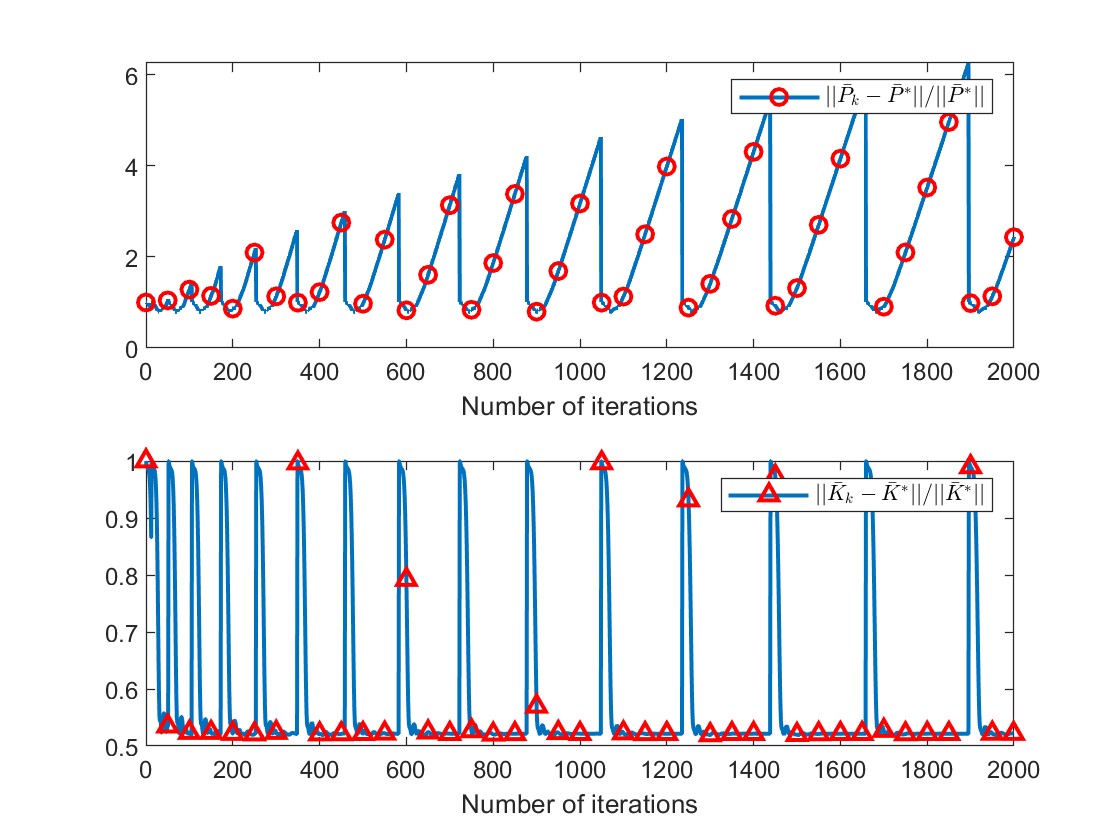}
\caption{Iteration curves of the full VI iterates $\bar P_k$ and $\bar K_k$ versus the iteration number.}
\label{fig:full_numerical_convergence}
\end{figure}

\begin{figure}[t]
\centering
\includegraphics[width=\columnwidth]{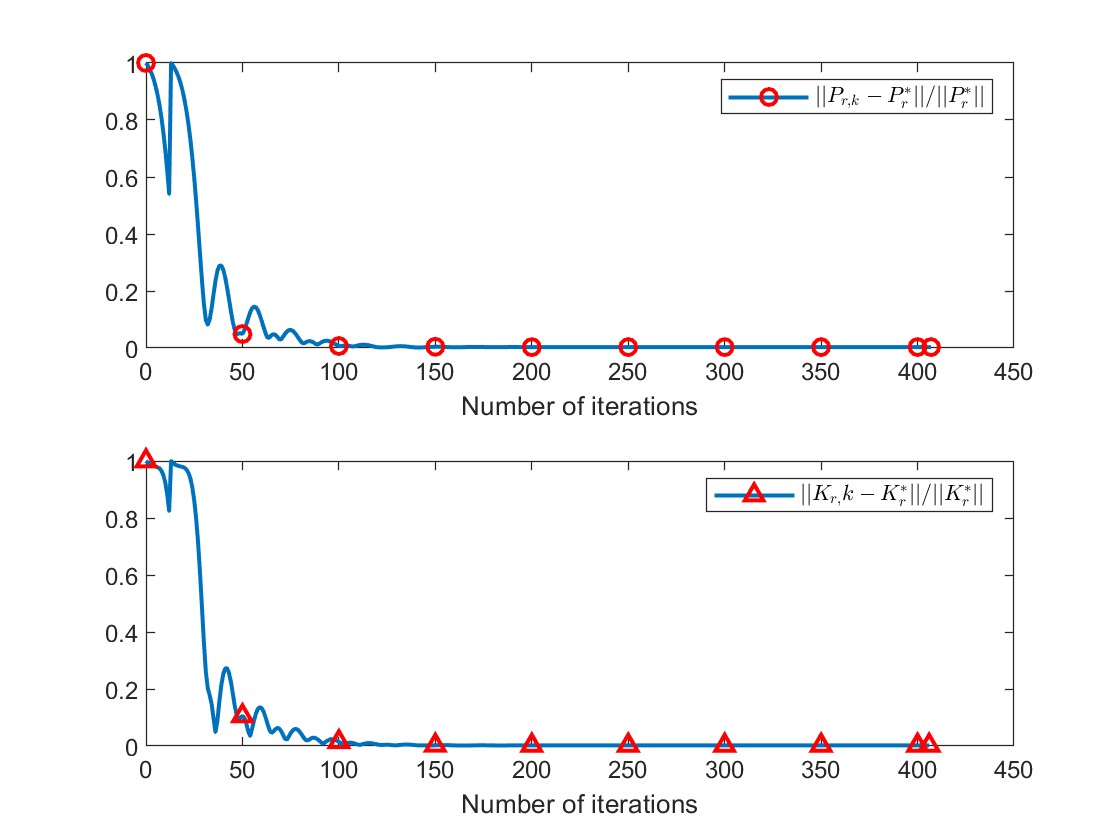}
\caption{Convergence curves of the reduced VI iterates $P_{r,k}$ and $K_{r,k}$ versus the iteration number.}
\label{fig:numerical_convergence}
\end{figure}

\begin{figure}[t]
\centering
\includegraphics[width=\columnwidth]{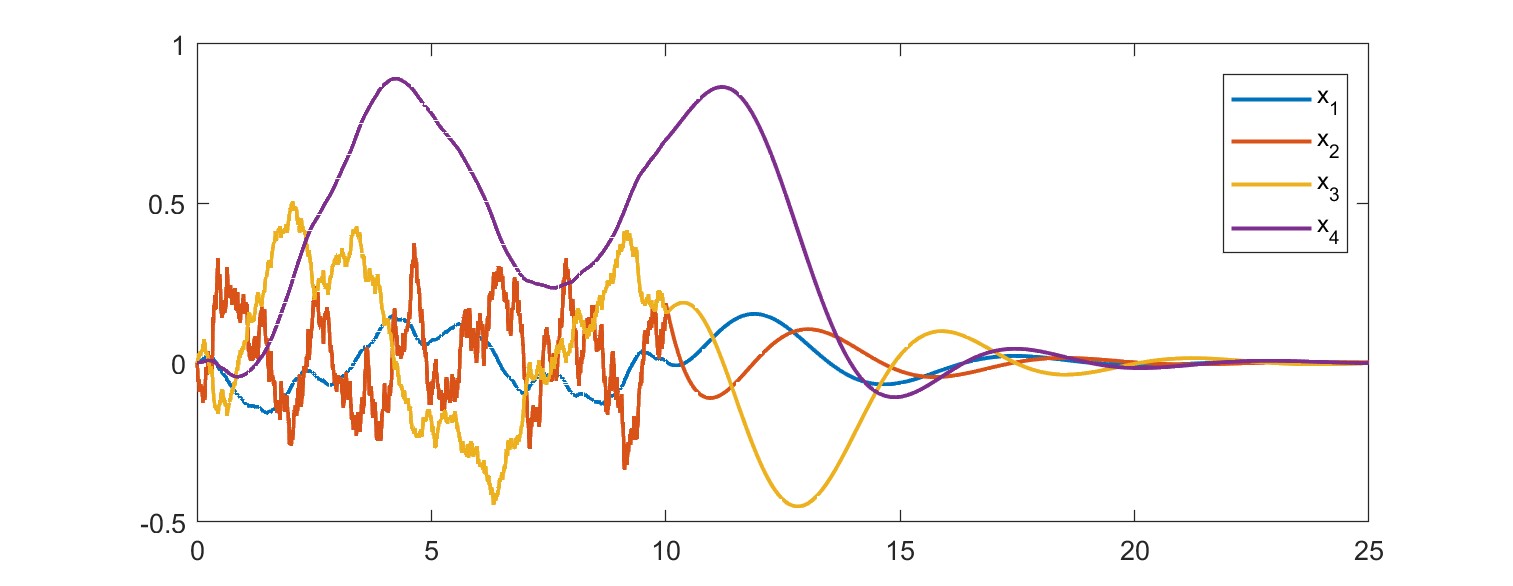}
\caption{Closed-loop response under the learned reduced output feedback controller: state trajectories.}
\label{fig:numerical_closed_loop}
\end{figure}


\section{Conclusion}

This article characterized the intrinsic dimension of the filter-based state parametrization used in model-free output feedback LQR learning. The linear-dependence analysis showed that the conventional filtered vector contains only $2n$ independent components for SIMO systems and $n(m+1)$ independent components for general MIMO systems, despite its larger ambient dimension. Based on this characterization, a QR-based reduction method was introduced to construct an independent reduced vector, and the model-free PI/VI equations were reformulated accordingly. The numerical example verified that the reduced formulation removes the rank deficiency of the Bellman regression equation and yields an effective output feedback LQR controller. Future work will consider the robustness of the proposed method under noisy and finite data, as well as its extension to online implementation and more general nonlinear systems.

\appendix
\section{Proof of Lemma~\ref{lem:simo_case_i_representation}}
\label{app:simo_case_i_lemma_proof}
Since 
\begin{equation}
    \gcd(a(s),d_i(s))=1,
\end{equation}
B\'ezout's identity implies that, for any polynomial $g(s)$, there exist polynomials $\bar\alpha(s)$ and $\bar\beta(s)$ satisfying
\begin{equation}
    \bar\alpha(s)a(s)+\bar\beta(s)d_i(s)=g(s).
    \label{eq:appendix_a_bezout_identity}
\end{equation}
In Lemma~\ref{lem:simo_case_i_representation}, let $g(s)=s^rd_j(s), r=0,\ldots,n-1$.
Since $\deg d_j(s)\leq n-1$, we have $\deg g(s)\leq 2n-2$. Starting from one B\'ezout representation~\eqref{eq:appendix_a_bezout_identity},
divide $\bar\beta(s)$ by $a(s)$, there exist polynomials $q(s)$ and $\beta(s)$ such that
\begin{equation}
    \bar\beta(s)=q(s)a(s)+\beta(s),
    \qquad
    \deg\beta(s)<n.
    \label{eq:appendix_a_beta_division}
\end{equation}
Substituting \eqref{eq:appendix_a_beta_division} into \eqref{eq:appendix_a_bezout_identity} gives
\begin{equation}
    \tilde\alpha(s)a(s)+\beta(s)d_i(s)=g(s),
    \qquad
    \deg\beta(s)<n.
    \label{eq:appendix_a_reduced_beta_representation}
\end{equation}
where
\[
    \tilde\alpha(s)=\bar\alpha(s)+q(s)d_i(s).
    \label{eq:appendix_a_alpha_tilde_definition}
\]

It remains to show that the coefficient $\tilde\alpha(s)$ in \eqref{eq:appendix_a_reduced_beta_representation} can also be chosen with degree less than $n$. From \eqref{eq:appendix_a_reduced_beta_representation},
\begin{equation}
    \tilde\alpha(s)a(s)=g(s)-\beta(s)d_i(s).
    \label{eq:appendix_a_alpha_tilde_identity}
\end{equation}
Since $\deg g(s)\leq 2n-2$, and $\deg\beta(s)<n, \deg d_i(s)\leq n-1$,
we have $\deg\big(g(s)-\beta(s)d_i(s)\big)\leq 2n-2$.
Because $\deg a(s)=n$, \eqref{eq:appendix_a_alpha_tilde_identity} implies that, if $\tilde\alpha(s)\neq 0$, we have
\begin{equation}
    \deg\tilde\alpha(s)\leq n-2<n.
    \label{eq:appendix_a_alpha_degree_bound}
\end{equation}
Thus, by taking
\begin{equation}
    \alpha_{jr}(s)=\tilde\alpha(s),
    \qquad
    \beta_{jr}(s)=\beta(s),
    \label{eq:appendix_a_coefficients_definition}
\end{equation}
we obtain
\begin{equation}
    \alpha_{jr}(s)a(s)+\beta_{jr}(s)d_i(s)=s^rd_j(s),
    \label{eq:appendix_a_final_representation}
\end{equation}
with $\deg\alpha_{jr}(s)<n, \deg\beta_{jr}(s)<n$. This completes the proof.

\end{document}